\documentclass[11pt]{article}
\usepackage[margin=1in]{geometry}
\usepackage[T1]{fontenc}
\usepackage{amsmath,amsfonts,amssymb,amsthm,mathtools}
\usepackage{mathpazo,tgpagella}
\usepackage[ruled,vlined,linesnumbered,commentsnumbered]{algorithm2e}
\SetKwInput{KwIn}{Input}
\SetKwInput{KwOut}{Output}
\SetKw{Return}{return}
\SetKwComment{Comment}{// }{}
\DontPrintSemicolon
\usepackage{enumitem,booktabs,array,microtype,placeins,needspace}
\usepackage[dvipsnames]{xcolor}
\usepackage{tikz,subcaption}
\usetikzlibrary{arrows.meta}
\usepackage{hyperref}
\usepackage{url}
\usepackage[capitalize,nameinlink]{cleveref}
\hypersetup{colorlinks=true,pdfpagemode=UseNone,citecolor=OliveGreen,
 linkcolor=NavyBlue,urlcolor=Magenta,pdfstartview=FitH,
 pdftitle={Nearly Optimal Sparse Navigable Graphs in General Metrics},
 pdfauthor={Hongyi Chen and Tianyi Zhang}}
\newtheorem{theorem}{Theorem}[section]
\newtheorem{lemma}[theorem]{Lemma}
\newtheorem{claim}[theorem]{Claim}

\theoremstyle{definition}
\newtheorem{definition}[theorem]{Definition}
\theoremstyle{remark}

\newcommand{\OPT}{\operatorname{OPT}}
\newcommand{\E}{\mathbb E}
\newcommand{\Prob}{\mathbb P}

\setlist{itemsep=3pt,topsep=4pt}

\title{Nearly Optimal Sparse Navigable Graphs in General Metrics}
\author{
Hongyi Chen\thanks{State Key Laboratory for Novel Software Technology,
Nanjing University, \href{mailto:hychener01@gmail.com}{\texttt{hychener01@gmail.com}}}
\and
Tianyi Zhang\thanks{State Key Laboratory for Novel Software Technology,
Nanjing University, \href{mailto:tianyiz25@nju.edu.cn}{\texttt{tianyiz25@nju.edu.cn}}}
}
\date{}
\begin{document}
\maketitle
\begin{abstract}
% !TEX root = ../main.tex
We study approximation algorithms for constructing sparse navigable graphs
in general metrics. Given an $n$-point metric $(P,d)$ and a parameter
$\alpha\ge1$, a directed graph on $P$ is $\alpha$-navigable if, for every pair of
distinct points $s,t\in P$, the source $s$ has an outgoing neighbor $p$
with $d(p,t)<d(s,t)/\alpha$. The objective is to minimize the number of
edges. Recent independent works of Khanna, Padaki, and Waingarten (2026) and Conway et al.\ (2026) 
give $O(\log n)$-approximation algorithms and show that an $o(\log n)$ approximation is NP-hard
in general. Before our work, the best known running time for such an
approximation for general $\alpha\ge1$ was
$\widetilde O(\min\{n\OPT_\alpha,n^{2.5}\})$, where $\OPT_\alpha$
is the minimum number of edges in an $\alpha$-navigable graph.

In this paper, we give a randomized algorithm that, with high probability,
runs in $\widetilde O(n^2)$ time and achieves an $O(\log n)$ approximation
for every $\alpha\ge1$.
In general, these guarantees are nearly optimal up to constant factors
in the approximation and polylogarithmic factors in the running time.

\end{abstract}
\thispagestyle{empty}
\clearpage

\setcounter{page}{1}
\pagestyle{plain}

% !TEX root = ../main.tex
\section{Introduction}\label{sec:introduction}

Nearest neighbor search aims to preprocess a dataset so that, given
a query point, an exact or approximate nearest neighbor
can be found efficiently. Graph-based methods approach this problem by building
a graph on the dataset points and answering queries by
following edges toward points closer to the query. This approach underlies
practical methods such as HNSW~\cite{Malkov2016HNSW},
NSG~\cite{FuEtAl19NSG}, and DiskANN~\cite{Subramanya2019DiskANN}. The graph must provide edges that
allow a search to make progress, while keeping neighborhoods small enough
to inspect efficiently.

Recent work has improved graph construction and search
guarantees~\cite{DiwanEtAl24,LuTao25,GollapudiEtAl25Sort,AlJazzaziEtAl25}.
Complementary studies examine the efficiency, accuracy, and
scalability of graph-based search in practical
settings~\cite{ZhangEtAl26Prune,AziziEtAl25,SinghEtAl21Fresh,GollapudiEtAl25Multi,AviMusco26}.

A central theoretical abstraction for this approach is
\emph{navigability}~\cite{IndykXu23}.

\begin{definition}[$\alpha$-navigability]\label{def:alpha-navigability}
Let $(P,d)$ be an $n$-point metric and let $\alpha\ge1$.
A directed graph $(P,E)$ with no self-loops is
\emph{$\alpha$-navigable} if, for every pair of distinct points
$s,t\in P$, some edge $(s,p)\in E$ satisfies
\[
                    \alpha d(p,t)<d(s,t).
\]
\end{definition}

We call $s$ the source, $t$ the target, and $p$ an intermediate point;
$p$ may be $t$ itself. For $\alpha=1$, the condition guarantees that a
greedy search can always move closer to a target in $P$ until it reaches
that target. Larger values of $\alpha$ require more progress from each
step. The analysis of Indyk and Xu~\cite{IndykXu23}
implies that, for every $0<\varepsilon\le1$, one can choose
$\alpha=O(1/\varepsilon)$ so that greedy search on an
$\alpha$-navigable graph finds a $(1+\varepsilon)$-approximate
nearest neighbor (ANN) of any query point in
$O(\log (\Delta / \varepsilon))$ search steps,
where $\Delta$ is the aspect ratio of $(P,d)$, defined as the
ratio of the largest pairwise distance to the smallest nonzero
pairwise distance. Logarithmic factors in asymptotic bounds are
understood to be at least $1$.

This motivates the following problem: given $(P,d)$ and
$\alpha\ge1$, build an $\alpha$-navigable graph with as few edges as
possible. Let $\OPT_\alpha$ denote the minimum number
of edges in such a graph. We seek an efficient algorithm whose
output size is within a small multiplicative factor of
$\OPT_\alpha$ for every input metric.

\paragraph{Previous work.}
Khanna, Padaki, and Waingarten~\cite{KhannaPadakiWaingarten26} and,
independently, Conway et al.~\cite{ConwayEtAl26} initiated
the study of approximation algorithms for this problem. 
Both formulate the problem as a Set Cover instance for each source.
For a fixed source $s$, each possible outgoing edge $(s,p)$
corresponds to the set of targets $t\in P\setminus\{s\}$ for which
$(s,p)$ satisfies the navigation requirement.
Choosing outgoing neighbors for $s$ is therefore equivalent
to selecting as few such sets as possible to cover
$P\setminus\{s\}$.
Applying greedy Set Cover to all $n$ instances yields an
$O(\log n)$ approximation in $O(n^3)$ time.
Both works also show that obtaining an $o(\log n)$ approximation
is NP-hard in general.
Thus the remaining algorithmic question is how quickly we can obtain the
logarithmic approximation.

For general $\alpha\ge1$,
Khanna, Padaki, and Waingarten~\cite{KhannaPadakiWaingarten26} give an
$\widetilde O(n\OPT_\alpha)$-time algorithm, and Conway et
al.~\cite{ConwayEtAl26} obtain
$\widetilde O(\min\{n\OPT_\alpha,n^{2.5}\})$ time. 
Here $\widetilde O$ suppresses factors polylogarithmic in $n$.
For $\alpha=1$, Conway et al.\ further obtain an
$\widetilde O(n^2)$-time construction, assuming that the distances
from each point to the other points are pairwise distinct.
Their work leaves near-quadratic construction for general $\alpha$
open. Khanna, Padaki, and Waingarten also use matrix multiplication
to obtain a \emph{bicriteria} guarantee: the
output is $\alpha$-navigable, but its size is compared with the optimum
under a stronger navigation requirement. These results leave the
following question:
\begin{quote}
\emph{Can we obtain an $O(\log n)$ approximation to the sparsest
$\alpha$-navigable graph in near-quadratic time for every $\alpha\ge1$?}
\end{quote}

\Needspace{12\baselineskip}
\subsection{Main Result}
\label{sec:main-result}
We give an affirmative answer to this question.
Even for $\alpha=1$, our result removes the distinct-distance
assumption in the near-quadratic construction of
Conway et al.~\cite{ConwayEtAl26}. Our contribution is a simple, unified algorithm to solve the coupled
Set Cover instances arising from navigable graph construction
in near-quadratic time.
\begin{theorem}\label{thm:main}
Let $(P,d)$ be an $n$-point metric space.
For every $\alpha\ge1$,
there is a randomized algorithm that constructs an
$\alpha$-navigable graph on $(P,d)$. With high probability, the algorithm runs in
$O(n^2\log^2 n)$ time and returns a graph with at most
$O(\log n)\OPT_\alpha$ edges, where $\OPT_\alpha$ is the minimum
number of edges in an $\alpha$-navigable graph on $(P,d)$.
\end{theorem}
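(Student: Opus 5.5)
Our plan is to solve all $n$ per-source Set Cover instances by a randomized, sampling-based version of greedy. We aim for total work $\widetilde O(n)$ per source, using only the $n-1$ distances from $s$ and, when needed, $O(n)$ distances from a candidate $p$. Fix a source $s$ and sort $P\setminus\{s\}$ by $d(s,\cdot)$. The set of candidate $p$ is $S_p=\{t : \alpha d(p,t)<d(s,t)\}$. Computing $S_p$ exactly costs $O(n)$, so greedy over all candidates would cost $O(n^2)$ per source. This is too much. We therefore never evaluate all sets; instead we use the following standard device. Process targets $t$ in a random order. Whenever $t$ is uncovered, sample a candidate $p$ among those that cover $t$, weighted suitably. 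This is the ``random uncovered element picks a set containing it'' scheme. Then add $(s,p)$ and mark $S_p$ covered, at cost $O(n)$ per added edge. The per-source cost is $O(n\cdot \deg(s))$, which sums to $O(n\cdot|E|)=O(n\log n\,\OPT_\alpha)$. That bound is not quadratic, so we need a second idea.

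The fix we would pursue is a threshold-greedy scheme (geometric density classes) combined with sampling to estimate $|S_p\cap U|$, where $U$ is the uncovered set. We run phases $j=\log n,\dots,0$. In phase $j$ we add any set covering at least $2^j$ uncovered targets; the standard analysis shows this gives an $O(\log n)$ approximation. To test a candidate $p$ in phase $j$, we sample $O(\log n\cdot |U|/2^j)$ random elements of $U$ and check membership in $S_p$ with $O(1)$ distance queries each. A candidate that fails the test is discarded for the rest of the phase; it remains small because $U$ only shrinks.

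The key accounting has two parts. For each source, each phase tests each of the $n$ candidates once. That costs about $n\cdot|U_j|/2^j\cdot\log n$. We need $|U_j|/2^j$ to be $O(\OPT_s)$, where $\OPT_s$ is the optimal out-degree at $s$. This holds because at the start of phase $j$ all sets have fewer than $2^{j+1}$ uncovered elements, and $\OPT_s$ sets cover $U_j$. Summing over $s$ gives $n\log^2 n\cdot\OPT_\alpha$, which is still not $n^2$.

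So the main obstacle is eliminating dependence on $\OPT_\alpha$. For that we would use a better sampling order: rather than testing candidates against sampled targets, sample targets and enumerate candidates covering them. For a fixed target $t$, the candidates covering $t$ are the points in the open ball $B(t,d(s,t)/\alpha)$. Using each point's sorted distance list, precomputed in $O(n^2\log n)$ time, we can enumerate or count them and sample uniformly in $O(\log n)$ time. We would analyze the following scheme: repeatedly pick a uniformly random uncovered $t$, pick a candidate $p$ covering $t$ with probability proportional to an estimate of $|S_p\cap U|$ obtained via counting, add $(s,p)$, and update $U$ in $O(n)$ time. The central difficulty is proving that the expected number of edges added is $O(\log n)\OPT_s$ while total work stays $\widetilde O(n)$ amortized per source. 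We expect this to require charging each edge's $O(n)$ update cost against the $\Omega(n/\OPT_s)$-type coverage it achieves. That charging is the step we would need to work out most carefully, and we are least sure that the stated bound follows from it.

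Finally, we boost correctness with high probability by a union bound over $n$ sources and $O(\log n)$ phases.
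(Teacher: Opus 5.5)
Your proposal does not yet prove the theorem, and the step you flag as uncertain is exactly where it fails. Every scheme you describe pays $\Theta(|U|)$ to update $U$ after each added neighbor $p$. The set $S_s(p)=\{t:\alpha d(p,t)<d(s,t)\}$ compares two distance functions and is not a ball around $p$, so without precomputed structure you cannot enumerate $S_s(p)\cap U$ faster than testing each $t\in U$.

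No charging argument can rescue an $O(n)$ cost per edge. If each edge covers about $n/k_s$ targets, as you hope, each covered target absorbs a charge of about $k_s$. A source then costs $\widetilde O(nk_s)$ rather than $\widetilde O(n)$, and the total is $\widetilde O(n\OPT_\alpha)$, the known bound that your threshold-greedy variant already recovers. This is tight: on the equilateral metric, where $S_s(p)=\{p\}$ and $\OPT_\alpha=n(n-1)$, rescanning $U$ after each of the $n-1$ edges costs $\Theta(n^2)$ per source and $\Theta(n^3)$ overall.

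The candidate side has the same problem. The ball $B(t,d(s,t)/\alpha)$ may contain $\Theta(n)$ points, so weighting its members by estimated coverage is itself $\Omega(n)$ work per sampled target. Finally, you never prove that the rule ``a random uncovered target picks a weighted candidate'' gives an $O(\log n)$ approximation.

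The missing idea is to spend part of the $O(\log n)$ budget on \emph{source-independent} pivots that make the residual instance sparse on both sides. The savings then come from structure shared across all sources. At a degree estimate $D\in[k_s,2k_s)$, the paper links $s$ to two pivot sets.
\begin{itemize}
\item A uniform sample $Q$ of $O(D\log n)$ points. Any target $t$ that remains uncovered satisfies $d(s,t)/\alpha\le\mu_Q(t)$. So every candidate covering $t$ lies strictly closer to $t$ than its nearest pivot, and the sampling ensures there are at most $n/D$ such points.
\item A set $R$ of $O(D\log n)$ points built by Thorup--Zwick-style adaptive sampling. Its lists $I_R(p)=\{t:d(p,t)<\mu_R(t)\}$ are precomputed and have size at most $n/D$. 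Since $S_s(p)\cap U\subseteq I_R(p)$, the exact coverage of a selected point is found in $O(n/D)$ time.
\end{itemize}

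With both degrees bounded by $n/D$, the algorithm samples $O(D\log n)$ uncovered targets and builds a sampled incidence graph with $O(n\log n)$ edges. It runs greedy for at most $D$ steps and validates the result exactly in $O(n)$ time. Hoeffding's inequality and a union bound over the $n^D$ collections of at most $D$ points show that the batch covers half of $U$ w.h.p.\ when $D\ge k_s$.

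Doubling $D$ handles the unknown $k_s$, and the pivot edges total $O(k_s\log n)$ because their sizes grow geometrically. Since $Q$, $R$, and the lists are built once per estimate in $O(n^2\log n)$ time, each source costs $O(n\log^2 n)$ and the whole construction costs $O(n^2\log^2 n)$.
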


The $O(\log n)$ approximation also holds separately for each source's
outdegree, and is optimal up to constants in general.
Moreover, even for $\alpha=1$, any randomized algorithm achieving
an $o(n)$ approximation with constant success probability requires
$\Omega(n^2)$ distance queries~\cite[Theorem~5]{KhannaPadakiWaingarten26}.
Thus our running time matches this query lower bound up to
polylogarithmic factors.

For any $0<\varepsilon\le1$, a more careful choice of
parameters and standard refinements of the analysis give a
$(1+\varepsilon)(\ln n+1)$ approximation in
$O_\varepsilon(n^2\log^2 n)$ time with high probability.
We present the simpler version for readability and briefly outline
this refinement at the end of Section~\ref{sec:overview}.

% !TEX root = ../main.tex
\subsection{Related Work}
\label{sec:related-work}

\paragraph{Bicriteria approximation.} For $\beta\ge\alpha$, let $\OPT_\beta$ denote the minimum number
of edges in a $\beta$-navigable graph on the same metric.
An $(\alpha,\beta)$-bicriteria approximation constructs an
$\alpha$-navigable graph and compares its edge count
with $\OPT_\beta$. Let $\omega$ denote
the matrix multiplication exponent.

Khanna, Padaki, and Waingarten~\cite{KhannaPadakiWaingarten26}
introduced this bicriteria formulation. For every $\alpha\ge1$ and
$0<\varepsilon<1$, their algorithm uses matrix multiplication
to construct an $\alpha$-navigable graph with
$O(\log n)\OPT_{2(1+\varepsilon)\alpha}$ edges in
$\widetilde O(n^\omega\log\Delta/\varepsilon)$ time. In the notation above,
this is an $(\alpha,2(1+\varepsilon)\alpha)$-bicriteria
$O(\log n)$ approximation.

In a follow-up work, Chen and Zhang~\cite{ChenZhang27} reduce the
running time to $\widetilde O(n^2\log\Delta)$ for constant
$\alpha\ge1$. Their algorithm achieves both an
$(\alpha,2\alpha+O(1))$-bicriteria $5$-approximation and an
$(\alpha,\alpha+O(1))$-bicriteria $O(\log n)$ approximation.
They also establish logarithmic approximation hardness for the
$(\alpha,2\alpha-\eta)$-bicriteria problem, for every constant
$\eta\in(0,1)$, under $\mathsf{P}\ne\mathsf{NP}$.

The constant-factor bicriteria guarantee is measured against
$\OPT_{2\alpha+O(1)}$, whereas Theorem~\ref{thm:main} gives an
$O(\log n)$ approximation relative to $\OPT_\alpha$.
Since $\OPT_\beta\ge\OPT_\alpha$ for $\beta>\alpha$,
a constant-factor bound relative to $\OPT_\beta$ does not
in general imply a constant-factor approximation relative
to $\OPT_\alpha$.
Our running time is also independent of the aspect ratio
$\Delta$.
Table~\ref{tab:related-objectives} summarizes the results.

\begin{table}[t]
\centering
\small
\setlength{\tabcolsep}{4pt}
\renewcommand{\arraystretch}{1.15}
\caption{Constructions of sparse $\alpha$-navigable graphs in general
metrics. Here $\Delta$ is the aspect ratio and $\omega$ is the matrix
multiplication exponent. The $\alpha=1$ result of Conway et al.\ assumes
distinct distances from each target. The matrix-multiplication
bound of Khanna et al.\ uses $0<\varepsilon<1$.}
\label{tab:related-objectives}
\begin{tabular}{@{}>{\raggedright\arraybackslash}p{0.27\textwidth}
>{\raggedright\arraybackslash}p{0.36\textwidth}
>{\raggedright\arraybackslash}p{0.33\textwidth}@{}}
\toprule
\textbf{Reference} & \textbf{Edge bound} & \textbf{Running time}\\
\midrule
Khanna et al.~\cite{KhannaPadakiWaingarten26}
& $O(\log n)\OPT_\alpha$
& $\widetilde O(n\OPT_\alpha)$\\[4pt]
Khanna et al.~\cite{KhannaPadakiWaingarten26}
& $O(\log n)\OPT_{2(1+\varepsilon)\alpha}$
& $\widetilde O(n^\omega\log\Delta/\varepsilon)$\\[4pt]
Conway et al.~\cite{ConwayEtAl26}
& $O(\log n)\OPT_\alpha$
& $\widetilde O(\min\{n\OPT_\alpha,n^{2.5}\})$\\[4pt]
Conway et al.~\cite{ConwayEtAl26}\newline ($\alpha=1$)
& $O(\log n)\OPT_1$
& $\widetilde O(n^2)$\\[4pt]
Chen--Zhang~\cite{ChenZhang27}
& $5\OPT_{2\alpha+O(1)}$
& $\widetilde O(n^2\log\Delta)$\\[4pt]
Chen--Zhang~\cite{ChenZhang27}
& $O(\log n)\OPT_{\alpha+O(1)}$
& $\widetilde O(n^2\log\Delta)$\\[4pt]
\textbf{This paper}
& $O(\log n)\OPT_\alpha$
& $\widetilde O(n^2)$\\
\bottomrule
\end{tabular}
\end{table}

\paragraph{Euclidean spaces.}
For fixed dimension $d$, Arya and Mount~\cite{AryaMount93} and
Fu et al.~\cite{FuEtAl19NSG} construct $1$-navigable graphs
on $n$ points in $\mathbb R^d$ under the Euclidean distance,
with $n2^{O(d)}$ edges in $O(n^2)$ and $O(n^2\log n)$ time,
respectively, assuming distinct distances.
For every $\alpha\ge1$, Indyk and Xu~\cite{IndykXu23}
construct an $\alpha$-navigable graph with
$O(n(2\alpha)^{O(d)}\log\Delta)$ edges in $O(n^3)$ time.

High dimension can force dense navigable graphs. For every fixed
$\gamma>0$, Diwan et al.~\cite{DiwanEtAl24} give Euclidean point
sets in $O_\gamma(\log n)$ dimensions requiring
$\Omega(n^{3/2-\gamma})$ edges for $1$-navigability, and point sets
in $O(\log^3 n)$ dimensions requiring
$\Omega(n^{3/2}/\log n)$ edges. Waizer and
Danieli~\cite{WaizerDanieli26} obtain an $\Omega(n^{3/2})$ lower
bound in $O(\sqrt n\log n)$ dimensions with pairwise distinct
distances. Together with the $O(n^{3/2})$ construction of Conway
et al.~\cite{ConwayEtAl26}, this gives a tight worst-case edge
bound under the distinct-distance assumption. With distance ties,
an equilateral metric requires all $n(n-1)$ directed edges.
Diwan et al.~\cite{DiwanEtAl24} also give
$O(\log n)$-dimensional instances requiring maximum outdegree
$\Omega(n)$ for $1$-navigability. For every fixed $\alpha>1$,
Al-Jazzazi et al.~\cite{AlJazzaziEtAl25} give $n$ points in
$O(\log n)$-dimensional Euclidean space, with pairwise distinct
distances, for which every $\alpha$-navigable graph requires all
$n(n-1)$ directed edges.

Other graph constructions in Euclidean space focus on efficient
ANN search rather than $\alpha$-navigability.
For $\varepsilon\in(0,1)$, Lu and
Tao~\cite{LuTao25} construct graphs with
$O(n\varepsilon^{-O(d)})$ edges in
$n\operatorname{polylog}(n\Delta)$ time with high probability,
supporting $(1+\varepsilon)$-ANN queries in
$\operatorname{polylog}(n\Delta)$ time.
Har-Peled, Raichel, and Robson~\cite{HarPeledRaichelRobson26}
obtain graphs with $O(n\varepsilon^{-d})$ edges supporting
$(1+\varepsilon)$-ANN queries in
$O(\varepsilon^{-d}\log\Delta)$ time.

\paragraph{Almost navigability.}
Avi and Musco~\cite{AviMusco26} relax $1$-navigability by requiring
that, for at least a $(1-\tau)$ fraction of each source's other
targets, some outgoing neighbor is closer to the target than
the source.
Assuming pairwise distinct distances, for $0<\tau,\xi<1$,
their randomized algorithm constructs a graph with $O(n/\tau)$
edges in $O(n\log(n/\xi)/\tau)$ time with success probability
at least $1-\xi$, assuming constant-time distance access.
Waizer and Danieli~\cite{WaizerDanieli26} show that the worst-case
edge count over Euclidean point sets with pairwise distinct distances is
$\Theta(n\min\{\tau^{-1},\sqrt n\})$ for all sufficiently large
$n$ and all sufficiently small $\tau>0$.
These results characterize the worst-case edge count for
almost navigability. Our algorithm approximates the optimum
on each input metric while ensuring $\alpha$-navigability
for every ordered pair, without assuming distinct distances.

% !TEX root = ../main.tex
\section{Technical Overview}\label{sec:overview}

For a fixed source $s$, choosing outgoing neighbors can be
formulated as a Set Cover problem, following previous
work~\cite{ConwayEtAl26,KhannaPadakiWaingarten26}.
For each possible intermediate point $p\ne s$, define
the set of targets for which the edge $(s,p)$ satisfies the navigation requirement:
\[
             S_s(p)=\{t\in P\setminus\{s\}:\alpha d(p,t)<d(s,t)\}.
\]
Choosing outgoing neighbors that satisfy the navigation requirement
for $s$ is equivalent to covering $P\setminus\{s\}$ with the
corresponding sets $S_s(p)$.
Let $k_s$ denote the minimum number of sets needed, equivalently
the minimum feasible outdegree of $s$.

We say that $p$ \emph{covers} $t$ when $t\in S_s(p)$. 
Let $U$ denote the \emph{uncovered targets}, those not
covered by any selected intermediate point.
Initially, $U=P\setminus\{s\}$.
Adding $p$ to the outgoing neighborhood inserts $(s,p)$
and updates $U\leftarrow U\setminus S_s(p)$.

An \emph{incidence} is a pair $(p,t)$ with $t\in S_s(p)$.
These incidences define a bipartite \emph{incidence graph}:
each left vertex corresponds to an intermediate point $p\ne s$
and represents the set $S_s(p)$, while each right vertex
corresponds to a target $t\ne s$.
The two vertices are adjacent exactly when $t\in S_s(p)$.
The degree of a target is its \emph{frequency}, the number
of sets containing it; the degree of an intermediate point
$p$ is the set size $|S_s(p)|$.

The graph has at most $n$ vertices on each side and can contain
$\Theta(n^2)$ edges.
Constructing it explicitly and running greedy,
which repeatedly selects a set covering the most uncovered
targets, takes $O(n^2)$ time per source and $O(n^3)$ time overall.
Our goal is to preserve the $O(\log n)$ approximation
while reducing the total running time to near-quadratic.

\paragraph{Limiting target frequency by random pivots.}

Suppose that we know a degree estimate $D$ satisfying
$k_s\le D\le2k_s$. Since we seek an $O(\log n)$ approximation,
we can afford $O(D\log n)$ outgoing edges.
Let $Q$ be the set of points obtained from $O(D\log n)$
independent uniform samples from $P$; we call these points
\emph{pivots}.
Add the edges $(s,q)$ for $q\in Q\setminus\{s\}$.

We show that after adding the pivot edges, every uncovered
target has frequency $O(n/D)$ with high probability.
\Cref{fig:random-pivot-prefix-overview} illustrates the argument.

% !TEX root = ../main.tex
\begingroup
\definecolor{qpInk}{HTML}{344756}
\definecolor{qpGray}{HTML}{91A0AA}
\definecolor{qpOrange}{HTML}{C66B28}
\definecolor{qpOrangeFill}{HTML}{F8D7B6}
\definecolor{qpTeal}{HTML}{187C80}
\definecolor{qpTealFill}{HTML}{B5DFDB}
\tikzset{
  qp point/.style={circle,draw=qpInk,fill=white,
    line width=.75pt,minimum size=3.7mm,inner sep=0pt},
  qp prefix point/.style={qp point,draw=qpTeal,fill=qpTealFill},
  qp intermediate/.style={qp point,draw=qpTeal,fill=qpTeal,
    line width=1pt},
  qp pivot/.style={qp point,draw=qpOrange,fill=qpOrangeFill,
    line width=1pt,minimum size=4.2mm},
  qp source/.style={rectangle,draw=qpInk,fill=white,
    line width=.8pt,minimum size=3.8mm,inner sep=0pt},
  qp mu guide/.style={draw=qpOrange,densely dashed,line width=.7pt},
  qp threshold guide/.style={draw=qpTeal,dash pattern=on 1.4pt off 1.4pt,
    line width=.8pt},
  qp label/.style={font=\small,text=qpInk},
}
% Six representative distance positions, shared by both panels:
% t=0, p1=1, p2=2, q_*=4, s=6, q'=8, scaled by .65 cm.
% These distances are realized by six points on the real line.
% Let Q={q_*,q'}, alpha=4: d(s,t)/alpha=6/4=1.5 <= mu_Q(t)=4.
% Points satisfying d(p,t)<mu_Q(t) are {t,p1,p2}; B_s(t)={t,p1}, since t may be its own
% intermediate point and only source s is excluded.
% The displayed Q and point counts are schematic, not the complete
% prescribed O(D log n) pivot sample for this six-point metric.
\newcommand{\qpFrame}{%
  \path[use as bounding box] (-.35,-.80) rectangle (6.85,3.0);}
\newcommand{\qpAxis}{%
  \draw[draw=qpGray,line width=.65pt,-{Stealth[length=1.8mm]}]
    (-.12,1.40) -- (5.72,1.40);
  \node[qp label,anchor=west,inner sep=1pt] at (5.80,1.40) {$d(\cdot,t)$};}
\newcommand{\qpMu}{%
  \draw[qp mu guide] (2.60,1.65) -- (2.60,2.48);
  \draw[qp mu guide] (2.60,.50) -- (2.60,.76);
  \node[font=\small,text=qpOrange] at (2.60,2.72) {$\mu_Q(t)$};}
\newcommand{\qpPrefixBand}{%
  \fill[qpTealFill] (0,.43) rectangle (2.60,.57);
  \fill[qpTeal] (0,.50) circle[radius=.038];
  \draw[draw=qpTeal,fill=white,line width=.65pt]
    (2.60,.50) circle[radius=.055];
  \node[font=\footnotesize,text=qpTeal] at (1.3,.14) {$d(p,t)<\mu_Q(t)$};}
\newcommand{\qpCommonNodes}{%
  \node[qp pivot] at (2.60,1.40) {};
  \node[qp source] at (3.90,1.40) {};
  \node[qp pivot] at (5.20,1.40) {};
  \foreach \x/\lab in {0/t,.65/p_1,1.30/p_2,2.60/q_*,3.90/s,5.20/q'} {
    \node[qp label] at (\x,.98) {$\lab$};
  }}

\begin{figure}[!htbp]
\captionsetup{font=small,labelfont=bf,labelsep=period}
\captionsetup[subfigure]{font=small,labelfont=bf,labelsep=space,
  justification=raggedright,singlelinecheck=false,skip=8pt}
\centering
\begin{subfigure}[t]{.475\textwidth}
  \centering
  \begin{tikzpicture}
    \qpFrame
    \qpAxis
    \qpMu
    \qpPrefixBand
    \foreach \x in {0,.65,1.30}
      \node[qp prefix point] at (\x,1.40) {};
    \qpCommonNodes
  \end{tikzpicture}
  \caption{\textbf{Few points closer than the nearest pivot.}
  Here points are ordered from left to right by their distance from $t$ for illustration.
  The nearest pivot $q_*$ has distance $\mu_Q(t)=d(q_*,t)$ from $t$.
  With high probability, only $O(n/D)$ points satisfy
  $d(p,t)<\mu_Q(t)$.}
  \label{fig:random-pivot-prefix}
\end{subfigure}\hfill
\begin{subfigure}[t]{.475\textwidth}
  \centering
  \begin{tikzpicture}
    \qpFrame
    \qpAxis
    \qpMu
    \draw[qp threshold guide] (.975,.60) -- (.975,2.48);
    \draw[qp threshold guide] (.975,-.22) -- (.975,-.04);
    \node[font=\small,text=qpTeal] at (.975,2.72) {$d(s,t)/\alpha$};
    \qpPrefixBand
    \fill[qpTeal] (0,-.29) rectangle (.975,-.15);
    \fill[qpTeal] (0,-.22) circle[radius=.038];
    \draw[draw=qpTeal,fill=white,line width=.65pt]
      (.975,-.22) circle[radius=.055];
    \node[font=\footnotesize,text=qpTeal] at (.4875,-.59) {$B_s(t)$};
    \foreach \x in {0,.65}
      \node[qp intermediate] at (\x,1.40) {};
    \node[qp prefix point] at (1.30,1.40) {};
    \qpCommonNodes
  \end{tikzpicture}
  \caption{\textbf{Uncovered targets have few choices.}
  If $t$ remains uncovered after adding edges from $s$ to $Q\setminus\{s\}$,
  then $d(s,t)/\alpha\le\mu_Q(t)$.
  Every intermediate point that covers $t$ satisfies $d(p,t)<d(s,t)/\alpha$.
  Denote these points by $B_s(t)$; all satisfy $d(p,t)<\mu_Q(t)$.}
  \label{fig:random-pivot-candidates}
\end{subfigure}

\vspace{4mm}
{\footnotesize
\tikz[baseline=-.6ex]\node[qp pivot,minimum size=2.8mm] {};
\ Sampled pivots
\qquad
\tikz[baseline=-.6ex]\node[qp prefix point,minimum size=2.8mm] {};
\ Points with $d(p,t)<\mu_Q(t)$
\qquad
\tikz[baseline=-.6ex]\node[qp intermediate,minimum size=2.8mm] {};
\ Intermediate points covering $t$
\qquad
\tikz[baseline=-.6ex]\node[qp source,minimum size=2.8mm] {};
\ Source
}

\caption{Random pivots bound target frequency.}
\label{fig:random-pivot-prefix-overview}
\end{figure}
\endgroup

Let $\mu_Q(t)$ be the distance from $t$ to its nearest pivot:
\[
                      \mu_Q(t)=\min_{q\in Q}d(q,t).
\]
If $t$ remains uncovered, no pivot covers it, so
$\mu_Q(t)\ge d(s,t)/\alpha$.
Every intermediate point $p$ covering $t$ must therefore satisfy
\[
    d(p,t)<\frac{d(s,t)}{\alpha}\le\mu_Q(t).
\]
Thus it suffices to bound the number of points strictly closer
to $t$ than its nearest pivot.

For a fixed target $t$, hitting one of its $\lceil n/D\rceil$
nearest points, with ties broken arbitrarily, ensures that
fewer than $\lceil n/D\rceil$ points lie strictly closer to $t$
than its nearest pivot.
Each draw hits these points with probability at least $1/D$,
so $O(D\log n)$ independent draws hit one with high probability.
Thus, with high probability, if $t$ remains uncovered,
its frequency is $O(n/D)$.
A union bound over all $n$ targets gives this frequency bound
for every uncovered target simultaneously with high probability.

The event that $Q$ hits one of these nearest points for every
target does not depend on the source.
Hence the frequency bound holds simultaneously for all sources,
and sources with the same degree estimate $D$ can share $Q$.

\paragraph{Sampled greedy coverage.}
In each iteration, we aim to choose at most $D$ additional
intermediate points covering at least half of the current
uncovered targets $U$. 
We sample targets from $U$ and run greedy on the sampled
Set Cover instance to choose these points.
We then check their coverage on $U$ exactly.
If they cover at least half of $U$, we add the
corresponding edges $(s,p)$ and remove the covered targets from $U$.
Otherwise, we discard the proposed points and report failure.

At the start of each iteration, draw a uniform sample
$T$ of $O(D\log n)$ targets from $U$ without replacement.
If $|U|$ is smaller than the prescribed sample size, take $T=U$.
This defines a \emph{sampled Set Cover instance} with targets
$T$ and sets $S_s(p)\cap T$ for $p\ne s$.
Its incidence graph retains all left vertices and only
the target vertices in $T$.

Run greedy on the sampled instance for at most $D$ steps,
each time choosing a restricted set covering the most
remaining sampled targets.
Stop earlier if all sampled targets are covered.
An optimal neighborhood of $s$ provides at most $k_s$ sets
covering $T$.
At each step, one of these sets covers at least a $1/k_s$
fraction of the remaining sampled targets, and greedy
covers at least as many.
Since $D\ge k_s$, the covered fraction of $T$ is therefore
at least
\[
    1-(1-1/k_s)^D\ge1-e^{-1}>5/8.
\]
We only need this constant coverage, rather than a cover
of the entire sample.

We use a standard sampling argument%
~\cite{HarPeledSharir11,HarPeledEtAl16}.
Fix any collection of at most $D$ intermediate points
covering less than half of $U$.
Hoeffding's inequality bounds the probability that it covers
at least $5/8$ of $T$ by $e^{-|T|/32}$.
There are at most $n^D$ such collections.
Thus $O(D\log n)$ sampled targets suffice, by a union bound,
to ensure that every collection covering at least $5/8$ of $T$
also covers at least half of $U$, with high probability.
If $T=U$, this implication holds deterministically.
This guarantee holds simultaneously for all such choices
on the same sample $T$.
It therefore applies to greedy's choices, even though they
are made after examining $T$.
Consequently, the proposed points cover at least half of $U$
with high probability.

For each target $t$, we sort $P$ by distance to $t$ once
and reuse this order for all sources.
By the frequency bound, each sampled target has degree
$O(n/D)$, so the sampled incidence graph has
$O(D\log n)\cdot O(n/D)=O(n\log n)$ edges.
For each $t\in T$, we scan its distance order until
$\alpha d(p,t)\ge d(s,t)$ to enumerate its incident edges.
We implement greedy using buckets indexed by the number
of sampled targets a set still covers, updating the affected
sets whenever a sampled target is covered.
Each incidence is processed $O(1)$ times, so constructing
the sampled incidence graph and running greedy take
$O(n\log n)$ time in total.

Each successful iteration halves $|U|$, so $O(\log n)$
such iterations use $O(D\log n)$ neighbors in total.
However, we must still find the targets in $U$ covered by
the proposed points, both to check their coverage exactly
and to update $U$.
Since a selected set may cover $O(n)$ targets, directly
processing up to $D$ selected sets can cost
$O(Dn)$ time per iteration.

\paragraph{Bounding set sizes with pivots.}

We obtain a second pivot set $R$ of size $O(D\log n)$
using a variant of Thorup and Zwick's center-selection
method~\cite[Theorem~3.1]{ThorupZwick01}.
Define the nearest-pivot distances and the lists by
\[
    \mu_R(t)=\min_{r\in R}d(r,t),
    \qquad
    I_R(p)=\{t\in P:d(p,t)<\mu_R(t)\}.
\]
The list $I_R(p)$ contains the targets for which $p$ is closer
than every pivot in $R$. These lists coincide with their clusters.
We require
\[
    |I_R(p)|\le n/D
    \qquad\text{for every }p\in P.
\]
We first explain how these lists eliminate the update bottleneck,
then show how to construct $R$ and the lists.

After adding the edges $(s,r)$ for $r\in R\setminus\{s\}$,
every target $t$ remaining in $U$ satisfies
$\mu_R(t)\ge d(s,t)/\alpha$.
Thus an intermediate point $p$ can cover $t$ only if
\[
    d(p,t)<\frac{d(s,t)}{\alpha}\le\mu_R(t),
\]
and consequently
\[
    S_s(p)\cap U\subseteq I_R(p).
\]

For the at most $D$ intermediate points proposed by greedy,
we scan their lists $I_R(p)$ and check which listed targets
belong to $U$ and satisfy the navigation inequality.
Marking each covered target only once gives the exact coverage
in $O(D\cdot n/D)=O(n)$ time.
If at least half of $U$ is covered, we add the proposed edges
and remove the marked targets from $U$; otherwise, we leave
$U$ unchanged.

Since $R$, $\mu_R$, and the lists $I_R(p)$ for $p\in P$
do not depend on the source, they can be shared by all sources
with the same degree estimate $D$.
For each source $s$, we add both $Q\setminus\{s\}$ and
$R\setminus\{s\}$ to its neighborhood before the greedy
iterations, contributing $O(D\log n)$ neighbors in total.
Using $\mu_Q(t),\mu_R(t)$, we can filter the current
$U$ in $O(n)$ time, retaining exactly the targets satisfying
\[
    \frac{d(s,t)}{\alpha}
    \le \min\{\mu_Q(t),\mu_R(t)\}.
\]
The set $Q$ bounds target frequencies, allowing us to construct
the sampled incidence graph and run greedy in $O(n\log n)$ time.
The lists obtained from $R$ support exact validation and
updates in $O(n)$ time.
Together, they give $O(n\log n)$ time per iteration.
\Cref{fig:residual-sampling-overview} illustrates this combination
of pivots, sampled greedy coverage, and exact validation.
% !TEX root = ../main.tex
\begingroup
\definecolor{riInk}{HTML}{344756}
\definecolor{riEdge}{HTML}{82929D}
\definecolor{riGhost}{HTML}{E5E9EC}
\definecolor{riOrange}{HTML}{C66B28}
\definecolor{riOrangeFill}{HTML}{F8D7B6}
\definecolor{riTeal}{HTML}{187C80}
\definecolor{riTealFill}{HTML}{B5DFDB}
\tikzset{
  ri intermediate/.style={rectangle,draw=riInk,fill=white,
    line width=.75pt,minimum size=3.4mm,inner sep=0pt},
  ri target/.style={circle,draw=riInk,fill=white,
    line width=.75pt,minimum size=3.8mm,inner sep=0pt},
  ri selected/.style={ri intermediate,draw=riOrange,fill=riOrangeFill,
    line width=1.15pt,minimum size=4.2mm},
  ri covered/.style={ri target,draw=riTeal,fill=riTealFill,
    line width=1pt,minimum size=4.2mm},
  ri sampled/.style={ri target,draw=riTeal,line width=1pt,
    minimum size=4.2mm},
  ri ghost/.style={ri target,draw=riGhost,fill=riGhost!35,
    line width=.6pt},
  ri edge/.style={draw=riEdge,line width=.65pt},
  ri dense edge/.style={draw=riEdge!85,line width=.55pt},
  ri faint edge/.style={draw=riGhost,line width=.55pt},
  ri cover edge/.style={draw=riTeal,line width=1.1pt},
  ri heading/.style={font=\small,text=riInk,align=center},
}
% Small, realizable metric example: P={s,p_1,...,p_8}, alpha=2,
% n=9, residual degree bound 2, combined pivots {s,p_1,p_8}.
% d(s,p)=2 for residual points p_2,...,p_7, and 2.5 otherwise.
% Distances between distinct intermediate points are .9 for pairs {2,3} and {5,6},
% and 1.1 otherwise. All metric triangles and both shared certificate
% bounds hold on the full point set, including s.
% Adding (Q union R)\{s} removes exactly targets 1 and 8.
\def\riEdges{2/2,2/3,3/2,3/3,4/4,5/5,5/6,6/5,6/6,7/7}
\def\riSampleEdges{2/2,2/3,3/2,3/3,5/5,6/5,7/7}
\def\riSampleCoverEdges{2/2,2/3,5/5,7/7}
\def\riFullCoverEdges{2/2,2/3,5/5,5/6,7/7}
\newcommand{\riLeftCoordinates}{%
  \foreach \p in {1,...,8}
    \coordinate (L\p) at (0,{3.6-(\p-1)*3.6/7});}
\newcommand{\riResidualCoordinates}{%
  \riLeftCoordinates
  \foreach \t in {2,...,7}
    \coordinate (R\t) at (4.4,{3.6-(\t-2)*3.6/5});}
\newcommand{\riLeftNodes}{%
  \foreach \p in {1,...,8} \node[ri intermediate] at (L\p) {};}
\newcommand{\riSelectedNodes}{%
  \foreach \p in {2,5,7} {
    \node[ri selected] at (L\p) {};
    \node[anchor=east,font=\small,text=riOrange,xshift=-3.5mm]
      at (L\p) {$p_{\p}$};}}
\newcommand{\riPanelBounds}{%
  \path[use as bounding box] (-1.55,-.20) rectangle (5.55,4.55);}
\newcommand{\riLabels}[1]{%
  \node[ri heading] at (0,4.22) {Intermediate points\\$P\setminus\{s\}$};
  \node[ri heading] at (4.4,4.22) {#1};}

\begin{figure}[!p]
\captionsetup{font=small,labelfont=bf,labelsep=period}
\captionsetup[subfigure]{font=small,labelfont=bf,labelsep=space,
  justification=raggedright,singlelinecheck=false,skip=9pt}
\centering
\begin{subfigure}[t]{.475\textwidth}
  \centering
  \begin{tikzpicture}
    \riPanelBounds
    \riLeftCoordinates
    \foreach \t in {1,...,8}
      \coordinate (R\t) at (4.4,{3.6-(\t-1)*3.6/7});
    \foreach \t in {1,8}
      \foreach \p in {1,...,8}
        \draw[ri dense edge] (L\p) -- (R\t);
    \foreach \p/\t in \riEdges
      \draw[ri dense edge] (L\p) -- (R\t);
    \riLeftNodes
    \foreach \t in {1,...,8} \node[ri target] at (R\t) {};
    \riLabels{Targets\\$P\setminus\{s\}$}
  \end{tikzpicture}
  \caption{\textbf{Intermediate points and targets.}
  Fix a source $s$.
  Left vertices represent intermediate points $p\ne s$,
  and right vertices represent targets $t\ne s$.
  An edge $(p,t)$ means $t\in S_s(p)$, equivalently
  $\alpha d(p,t)<d(s,t)$.}
  \label{fig:residual-before-centers}
\end{subfigure}\hfill
\begin{subfigure}[t]{.475\textwidth}
  \centering
  \begin{tikzpicture}
    \riPanelBounds
    \riResidualCoordinates
    \foreach \p/\t in \riEdges
      \draw[ri edge] (L\p) -- (R\t);
    \riLeftNodes
    \foreach \t in {2,...,7} \node[ri target] at (R\t) {};
    \riLabels{Uncovered targets\\$U$}
  \end{tikzpicture}
  \caption{\textbf{After adding the pivot edges.}
  Add edges from $s$ to $(Q\cup R)\setminus\{s\}$
  and remove covered targets, leaving $U$.
  With high probability, the remaining graph has maximum degree
  $O(n/D)$ on both sides: $Q$ bounds target frequencies,
  and $R$ bounds set sizes.
  This graph is kept implicit.}
  \label{fig:residual-after-centers}
\end{subfigure}

\vspace{8mm}

\begin{subfigure}[t]{.475\textwidth}
  \centering
  \begin{tikzpicture}
    \riPanelBounds
    \riResidualCoordinates
    \foreach \p/\t in \riEdges
      \draw[ri faint edge] (L\p) -- (R\t);
    \foreach \p/\t in \riSampleEdges
      \draw[ri edge] (L\p) -- (R\t);
    \foreach \p/\t in \riSampleCoverEdges
      \draw[ri cover edge] (L\p) -- (R\t);
    \riLeftNodes
    \foreach \t in {2,...,7} \node[ri ghost] at (R\t) {};
    \foreach \t in {2,3,5,7} \node[ri sampled] at (R\t) {};
    \foreach \t in {2,3,5,7} \node[ri covered] at (R\t) {};
    \riSelectedNodes
    \riLabels{Sampled targets\\$T$}
  \end{tikzpicture}
  \caption{\textbf{Sampled greedy coverage.}
  Sample targets $T\subseteq U$ uniformly without replacement.
  Keep all intermediate points and only target vertices in $T$.
  Construct the sampled graph explicitly and run greedy
  for at most $D$ choices, obtaining $S$ (orange).
  Unsampled targets and their edges are faded.}
  \label{fig:residual-sampled-greedy}
\end{subfigure}\hfill
\begin{subfigure}[t]{.475\textwidth}
  \centering
  \begin{tikzpicture}
    \riPanelBounds
    \riResidualCoordinates
    \foreach \p/\t in \riEdges
      \draw[ri edge] (L\p) -- (R\t);
    \foreach \p/\t in \riFullCoverEdges
      \draw[ri cover edge] (L\p) -- (R\t);
    \riLeftNodes
    \foreach \t in {2,...,7} \node[ri target] at (R\t) {};
    \foreach \t in {2,3,5,6,7} \node[ri covered] at (R\t) {};
    \riSelectedNodes
    \riLabels{Targets in $U$}
  \end{tikzpicture}
  \caption{\textbf{Exact validation on $U$.}
  Check the same selected points $S$ against all targets in $U$
  by scanning their lists $I_R(p)$.
  If at least half of $U$ is covered, add $S$ to the neighborhood
  and remove the covered targets; otherwise, leave $U$ unchanged.}
  \label{fig:residual-lifted-coverage}
\end{subfigure}

\vspace{5mm}
{\footnotesize
\tikz[baseline=-.6ex]\node[ri selected,minimum size=3mm] {};
\ Selected intermediate points $S$
\qquad
\tikz[baseline=-.6ex]\node[ri covered,minimum size=3mm] {};
\ Covered targets
\par\vspace{2mm}

\tikz[baseline=-.6ex]\node[ri sampled,minimum size=3mm] {};
\ Sampled, not covered by $S$
\qquad
\tikz[baseline=-.6ex]\node[ri ghost,minimum size=3mm] {};
\ Unsampled in (c)
}

\caption{Pivots and sampled greedy coverage.}
\label{fig:residual-sampling-overview}
\end{figure}
\endgroup

\clearpage

\paragraph{Constructing \texorpdfstring{$R$}{R} by adaptive sampling.}

Call a point $p$ \emph{satisfied} when its current list
$I_R(p)$ has size at most $n/D$.
Start with pivots $R=\varnothing$ and set $A=P$, interpreting
$\mu_\varnothing(t)=+\infty$.
While $|A|>2D$, perform a round: draw $2D$ points
independently and uniformly from $A$, add them to $R$,
and remove from $A$ every point that is now satisfied.
When $|A|\le2D$, add all remaining points of $A$ to $R$
and finish.

Adding pivots can only shrink the lists, so every point
removed from $A$ remains satisfied.
Each point added to $R$ has $I_R(p)=\varnothing$, so the final
insertion satisfies all remaining points.
Thus every point is satisfied when the construction finishes.
\Cref{fig:adaptive-pivot-overview} illustrates one round.

% !TEX root = ../main.tex
\begingroup
\definecolor{apInk}{HTML}{344756}
\definecolor{apBar}{HTML}{718897}
\definecolor{apGhost}{HTML}{DFE5E9}
\definecolor{apOrange}{HTML}{C66B28}
\definecolor{apOrangeFill}{HTML}{F8D7B6}
\definecolor{apTeal}{HTML}{187C80}
\definecolor{apTealFill}{HTML}{B5DFDB}
\tikzset{
  ap point/.style={circle,draw=apInk,fill=white,
    line width=.7pt,minimum size=3.5mm,inner sep=0pt},
  ap pivot/.style={ap point,draw=apOrange,fill=apOrangeFill,
    line width=1.1pt},
  ap satisfied/.style={ap point,draw=apTeal,fill=apTealFill,
    line width=.9pt},
  ap retained/.style={ap point,draw=apInk,fill=apInk,
    line width=.8pt},
  ap faded/.style={ap point,draw=apGhost,fill=apGhost!30},
  ap label/.style={anchor=east,font=\small,text=apInk},
  ap threshold/.style={draw=apInk!65,densely dashed,line width=.6pt},
}
% Exact metric-realizable illustration, not a generic probability claim:
% P={p1,...,p6}, n=6, D=2, threshold n/D=3, R initially empty.
% Distinct distances are .9 for undirected pairs {1,3}, {1,6},
% {3,4}, {3,6}, {4,6}; all other distinct distances are 1.1.
% Four independent draws can give (p2,p5,p2,p5), so B={p2,p5}.
% List sizes change from (6,6,6,6,6,6) to (3,0,4,3,0,4).
% Points p1,p4 meet the threshold; pivots p2,p5 have empty lists.
% A_next={p3,p6}. Monotonicity keeps removed points satisfied.
\newcommand{\apFrame}[2]{%
  \path[use as bounding box] (-.67,-.33) rectangle (3.20,3.63);
  \node[font=\small,text=apInk] at (0,3.43) {$#1$};
  \node[font=\small,text=apInk] at (2.42,3.43) {$#2$};
  \draw[ap threshold] (1.64,-.18) -- (1.64,2.95);
  \node[font=\footnotesize,text=apInk] at (1.64,3.10) {$n/D$};}
\newcommand{\apCoordinates}{%
  \foreach \p in {1,...,6}
    \coordinate (P\p) at (0,{2.70-(\p-1)*.54});}
\newcommand{\apBarDraw}[3]{%
  \pgfmathsetmacro{\apY}{2.70-(#1-1)*.54}
  \fill[#3,rounded corners=.4pt]
    (.50,{\apY-.085}) rectangle ({.50+#2*.38},{\apY+.085});}
\newcommand{\apNormalLabels}{%
  \foreach \p in {1,...,6}
    \node[ap label,xshift=-2.5mm] at (P\p) {$p_{\p}$};}

\begin{figure}[!ht]
\captionsetup{font=small,labelfont=bf,labelsep=period}
\captionsetup[subfigure]{font=small,labelfont=bf,labelsep=space,
  justification=raggedright,singlelinecheck=false,skip=8pt}
\centering
\begin{subfigure}[t]{.31\textwidth}
  \centering
  \begin{tikzpicture}
    \apFrame{A}{|I_R(p)|}
    \apCoordinates
    \foreach \p in {1,...,6} {
      \apBarDraw{\p}{6}{apBar}
      \node[ap point] at (P\p) {};
    }
    \apNormalLabels
  \end{tikzpicture}
  \caption{\textbf{$A$ before the round.}
  Each $p\in A$ has $|I_R(p)|>n/D$.
  Bars show list sizes; the dashed line marks the satisfaction
  threshold. Here $R=\varnothing$ and all six lists have size $n$.}
  \label{fig:adaptive-pivot-before}
\end{subfigure}\hfill
\begin{subfigure}[t]{.31\textwidth}
  \centering
  \begin{tikzpicture}
    \apFrame{A}{|I_{R'}(p)|}
    \apCoordinates
    \foreach \p in {1,4} {
      \apBarDraw{\p}{3}{apTealFill}
      \node[ap satisfied] at (P\p) {};
    }
    \foreach \p in {3,6} {
      \apBarDraw{\p}{4}{apBar}
      \node[ap point] at (P\p) {};
    }
    \foreach \p in {2,5} {
      \node[ap pivot] at (P\p) {};
      \node[anchor=west,font=\footnotesize,text=apOrange,xshift=4.5mm]
        at (P\p) {$0$};
    }
    \apNormalLabels
  \end{tikzpicture}
  \caption{\textbf{Sample pivots and update.}
  Draw $2D$ points independently and uniformly from $A$ with replacement.
  Add the distinct pivots $B$ (orange) to $R$, forming $R'=R\cup B$.
  Their lists become empty; other points can also become satisfied.}
  \label{fig:adaptive-pivot-update}
\end{subfigure}\hfill
\begin{subfigure}[t]{.31\textwidth}
  \centering
  \begin{tikzpicture}
    \apFrame{A_{\mathrm{next}}}{|I_{R'}(p)|}
    \apCoordinates
    \foreach \p in {1,4} {\apBarDraw{\p}{3}{apGhost!60}}
    \foreach \p in {1,2,4,5} {
      \node[ap faded] at (P\p) {};
      \node[ap label,text=apGhost,xshift=-2.5mm]
        at (P\p) {$p_{\p}$};
    }
    \foreach \p in {3,6} {
      \apBarDraw{\p}{4}{apInk}
      \node[ap retained] at (P\p) {};
      \node[ap label,xshift=-2.5mm] at (P\p) {$p_{\p}$};
    }
  \end{tikzpicture}
  \caption{\textbf{Retain unsatisfied points.}
  Keep only points with $|I_{R'}(p)|>n/D$.
  Faded points are removed from $A$, while sampled pivots
  remain in $R'$. Two points survive in this illustrated round.}
  \label{fig:adaptive-pivot-filter}
\end{subfigure}

\vspace{4mm}
{\footnotesize
\tikz[baseline=-.6ex]\node[ap pivot,minimum size=2.8mm] {};
\ Sampled pivots
\qquad
\tikz[baseline=-.6ex]\node[ap satisfied,minimum size=2.8mm] {};
\ Newly satisfied
\qquad
\tikz[baseline=-.6ex]\node[ap retained,minimum size=2.8mm] {};
\ Retained
\qquad
\tikz[baseline=-.6ex]\node[ap faded,minimum size=2.8mm] {};
\ Removed from $A$
}

\vspace{4mm}
{\color{apInk}\small\bfseries Contraction in expectation}\par
\vspace{1mm}
\(
  \displaystyle
  \mathbb E\!\left[|A_{\mathrm{next}}|\right]
  \le \frac{|A|}{2}
\)\par

\caption{One round of adaptive sampling for $R$.}
\label{fig:adaptive-pivot-overview}
\end{figure}
\endgroup

To bound the number of rounds, fix $R$ and $A$ before a round's draws.
For each target $t$, the expected number of points in $A$
closer to $t$ than every new pivot is at most
$|A|/(2D+1)$.
Summing over all $n$ targets, the expected total size of 
the lists $I_R(p)$ for $p \in A$, after adding the new
pivots, is at most $n|A|/(2D + 1)$. Every point that 
remains in $A$ contributes more than $n/D$ to this sum. 
So the expected number of unsatisfied points is less than $|A|/2$.

This contraction holds for every state before a round.
Consequently, all points are satisfied after $O(\log n)$ rounds
with high probability. Since each round adds at most $2D$ pivots,
this gives a set $R$ of size $O(D\log n)$.
A round can be implemented by scanning the distance matrix in
$O(n^2)$ time. Thus constructing $R$ and its lists takes
$O(n^2\log n)$ time with high probability.
This preprocessing is performed once for each degree estimate
$D$ and reused by all sources with that estimate.

\FloatBarrier
\Needspace{8\baselineskip}
\paragraph{Completing the construction.}
To handle the unknown value of $k_s$, try degree estimates
$D=1,2,4,\ldots$, capping the last estimate at $n-1$.
At each estimate, add edges to the corresponding pivots
and remove the targets they cover from $U$.
Repeat the sampled greedy iterations, adding the proposed
points as neighbors only if they cover at least half of $U$.
If this check fails, move to the next estimate, retaining
all previously added neighbors and the current $U$.
Stop when $U$ becomes empty.

With high probability, every iteration at the first estimate
$D \ge k_s$ passes the check, so the source finishes at that estimate or
earlier. This estimate is less than $2k_s$. Each successful iteration
therefore adds $O(k_s)$ neighbors, and there are only $O(\log n)$
successful iterations across all estimates because each halves $U$.
The pivot sets also contribute $O(k_s\log n)$ neighbors in total,
since their sizes grow geometrically with the degree estimates.
Summing over sources gives $O(\log n)\OPT_\alpha$ edges.

There is at most one failed iteration per estimate, so each source
performs $O(\log n)$ iterations in total. Each takes
$O(n\log n)$ time, giving $O(n\log^2 n)$ time per source.
The shared preprocessing for all $O(\log n)$ degree estimates
costs $O(n^2\log^2 n)$ in expectation and with high probability.
Thus the total running time is $O(n^2\log^2 n)$.

\paragraph{Relation to prior techniques.}
Set sampling and element sampling are established techniques
in streaming and sublinear Set Cover%
~\cite{DemaineEtAl14,HarPeledEtAl16,IndykEtAl18}.
The former covers high-frequency elements using random sets,
as in our construction of $Q$; the latter selects sets on
a sampled universe to cover much of the original universe,
as in our sampled greedy coverage.
Conway et al.'s \textsc{VoteCover}~\cite{ConwayEtAl26}
achieves a similar effect through a sampling-based simulation
of greedy, using voting ideas with earlier precedents in
Rajagopalan and Vazirani~\cite{RajagopalanVazirani98}.

Controlling the cost of updating uncovered targets requires
additional structure. 
For $\alpha=1$, Conway et al.\ partition unfinished sources
into $O(n/D)$ groups of at most $D$ vertices and make each
group $C$ a clique in the output graph.
In their setting, the vertex of $C$
closest to an external target $t$ is a valid intermediate point for
every other source in $C$, but need not satisfy the
navigation requirement for $\alpha>1$.
Thus, when $\alpha=1$, each clique leaves at most one uncovered
source--target pair per external target, giving $O(n)$ such pairs
per clique. 
Across all groups, this leaves $O(n^2/D)$ uncovered
source--target pairs.
Their $\widetilde O(D)$ update cost per pair therefore
gives $\widetilde O(n^2)$ total update time.
To control this update cost for general $\alpha$, we adapt
the center-selection construction of Thorup and Zwick%
~\cite[Theorem~3.1]{ThorupZwick01}.
Together with the sampling techniques above, this yields
a near-quadratic construction for every $\alpha\ge1$.

\paragraph{Refining the approximation factor.}
To reduce the leading constant, use $O(\varepsilon D\log n)$ pivots
at degree estimate $D$, for $0<\varepsilon\le1$. The two residual
degree bounds then become $O(n/(\varepsilon D))$. 
Use $O(\varepsilon^{-2}D\log n)$ sampled targets and accept
the first greedy prefix leaving at most an
$e^{-1}+O(\varepsilon)$ fraction of $U$ uncovered,
updating the exact coverage incrementally. 
When $D\ge k_s$, this rule accepts a prefix of at most
$k_s$ choices with high probability.
The smaller pivot sets contribute only $O(\varepsilon k_s\log n)$
neighbors in total. Choosing the constants appropriately gives a
$(1+\varepsilon)(\ln n+1)$ approximation while preserving exact
$\alpha$-navigability through validation.

These refinements take $O(\varepsilon^{-3}n^2\log^2 n)$ time with
high probability. The factor $\varepsilon^{-3}$ combines the
$\varepsilon^{-1}$ increase in the residual degree bounds with the
$\varepsilon^{-2}$ increase in sample size needed to control the
coverage error to $O(\varepsilon)$. The known hardness
results~\cite{KhannaPadakiWaingarten26,ConwayEtAl26} establish the
logarithmic dependence, but do not show that the leading coefficient
$1$ is optimal for this problem.
% !TEX root = ../main.tex
\section{Preliminaries}
\label{sec:preliminaries}
Let $(P,d)$ be an $n$-point metric and $\alpha\ge1$. All pairwise
distances are given by the matrix $(d(p,t))_{p,t\in P}$.
We use $\log$ for base-$2$ logarithms and $\ln$ for natural logarithms.
Unless stated otherwise, we round up to integers where necessary
and omit the ceiling notation.

For a fixed source $s$, an intermediate point $p\in P\setminus\{s\}$
is a possible outgoing neighbor of $s$. Define its covered targets by
\[
 S_s(p)=\{t\in P\setminus\{s\}:\alpha d(p,t)<d(s,t)\}.
\]
We say that $p$ covers $t$ for source $s$ precisely when
$t\in S_s(p)$, meaning that the edge $(s,p)$ satisfies the
navigation requirement toward $t$. We allow $p=t$.
For $t\in P\setminus\{s\}$, the intermediate points that cover it form
\[
 B_s(t)=\{p\in P\setminus\{s\}:t\in S_s(p)\}.
\]
The frequency of $t$ is $|B_s(t)|$.
The optimum outdegree for $s$ is
\[
 k_s=\min\left\{|N_s|:\ N_s\subseteq P\setminus\{s\},\quad
       \bigcup_{p\in N_s}S_s(p)=P\setminus\{s\}\right\}.
\]
The constraints and edge costs decompose by source, so
$\OPT_\alpha=\sum_{s\in P}k_s$.

The analysis uses the following coverage and concentration bounds.

\begin{lemma}[Greedy maximum coverage~\cite{NemhauserWolseyFisher78}]
\label{lem:greedy-coverage}
Let $V$ be a finite nonempty universe and $\mathcal F$ a finite
family of subsets of $V$. 
Suppose that $V$ can be covered by at most $k$ sets from
$\mathcal F$.
Repeatedly choose a set covering the largest number of
currently uncovered elements, stopping after $k$ choices
or when all elements are covered.
The chosen sets cover at least $(1-e^{-1})|V|$
elements.
\end{lemma}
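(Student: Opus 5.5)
The plan is the classical greedy argument, in the form where we track the number of uncovered elements step by step. Fix an optimal subfamily $\mathcal C^*\subseteq\mathcal F$ with $|\mathcal C^*|\le k$ and $\bigcup\mathcal C^*=V$. Let $u_i$ be the number of elements of $V$ left uncovered after the greedy procedure has made $i$ choices, so $u_0=|V|$. If the procedure stops early because everything is covered, the claim holds trivially. So we may assume it makes exactly $k$ choices, and the goal is to show $u_k\le e^{-1}|V|$.

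The key step is a per-step progress inequality. Suppose that before step $i+1$ the set of uncovered elements is $W\ne\varnothing$. The sets of $\mathcal C^*$ cover all of $W$, and there are at most $k$ of them. By averaging, some $C\in\mathcal C^*$ covers at least $|W|/k$ elements of $W$. Greedy picks a set covering at least as many currently uncovered elements, so
\[
 u_{i+1}\le u_i-\frac{u_i}{k}=\Bigl(1-\frac1k\Bigr)u_i .
\]
This also holds trivially when $u_i=0$. Iterating it $k$ times gives
\[
 u_k\le\Bigl(1-\frac1k\Bigr)^k|V|\le e^{-1}|V|,
\]
using $1-x\le e^{-x}$. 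Hence the chosen sets cover at least $(1-e^{-1})|V|$ elements.

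There is no real obstacle here. The only points needing care are these:
\begin{itemize}
\item The averaging step must be applied to the current uncovered set $W$, not to $V$. It is valid because $\mathcal C^*$ still covers $W$.
\item We need $k\ge1$ for the bound to make sense. This holds because $V$ is nonempty, so any cover uses at least one set.
\item Ties among sets with equal marginal coverage may be broken arbitrarily without affecting the argument.
\end{itemize}
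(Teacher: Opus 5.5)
Your proposal is correct and follows essentially the same argument as the paper. Both average over the at most $k$ optimal sets on the \emph{current} uncovered set, then iterate the resulting factor $(1-1/k)$ over $k$ steps, and handle early stopping as the trivial full-coverage case.
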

\begin{proof}
Fix at most $k$ sets covering $V$. If $U$ is the set of currently
uncovered elements, one of these sets contains at least $|U|/k$
elements of $U$. Greedy covers at least as many.

If the procedure stops before $k$ choices, it therefore covers all
of $V$. Otherwise each choice reduces the number of uncovered
elements by a factor at most $1-1/k$. After $k$ choices, at most
\[
 (1-1/k)^k|V|\le e^{-1}|V|
\]
elements remain uncovered, proving the bound.
\end{proof}

\begin{lemma}[Hoeffding upper-tail inequality~\cite{Hoeffding63}]
\label{lem:hoeffding}
Let $V$ be a finite nonempty set with fixed values $x_v\in[0,1]$
for $v\in V$.
For an integer $1\le q\le|V|$, let $W$ be uniformly distributed
among the $q$-element subsets of $V$. For every $\eta>0$,
\[
 \Prob\left[\frac1q\sum_{v\in W}x_v-\frac1{|V|}\sum_{v\in V}x_v\ge\eta\right]
 \le e^{-2q\eta^2}.
\]
\end{lemma}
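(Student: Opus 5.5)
The plan is the standard Chernoff--Hoeffding argument, with one extra step that reduces sampling without replacement to sampling with replacement, following Hoeffding's original treatment. Write $N=|V|$, $\bar x=\frac1N\sum_{v\in V}x_v$ and $S_W=\sum_{v\in W}x_v$. Let $X_1,\dots,X_q$ be independent, each equal to $x_v$ for a uniformly random $v\in V$, and set $S^{\mathrm{rep}}=\sum_{i=1}^q X_i$. Both sums have mean $q\bar x$. For every $\lambda>0$, Markov's inequality gives
\[
 \Prob\left[S_W-q\bar x\ge q\eta\right]\le e^{-\lambda q\eta}\,\E\left[e^{\lambda(S_W-q\bar x)}\right].
\]
The whole proof therefore reduces to bounding this moment generating function by $e^{\lambda^2q/8}$.

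\textbf{Step 1 (main obstacle): convex domination.} I will show that $\E[f(S_W)]\le\E[f(S^{\mathrm{rep}})]$ for every convex $f$, in particular for $f(y)=e^{\lambda(y-q\bar x)}$. My first choice is Hoeffding's argument via a conditional-expectation representation.
\begin{enumerate}
\item Draw the indices $i_1,\dots,i_q$ of the with-replacement sample, and let $J$ be the set of distinct indices drawn, with $|J|=m\le q$.
\item Define $g(i_1,\dots,i_q)$ as the average of $S_{W'}$ over $q$-subsets $W'$ chosen as follows: $W'$ contains $J$ and is completed by $q-m$ further elements drawn uniformly from $V\setminus J$.
\item By symmetry, a $q$-subset built this way from a uniform index sequence is itself uniform among $q$-subsets. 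Hence $S_W$ has the same law as a random variable $Z$ with $\E[Z\mid i_1,\dots,i_q]=g$.
\item The crux is to arrange the coupling so that $g=S^{\mathrm{rep}}$, or at least so that $S_W$ is a conditional expectation of $S^{\mathrm{rep}}$. This needs the exchangeability bookkeeping that matches repeated draws with the fresh elements used to complete $J$; this is where the real work lies.
\item Once $S_W\overset{d}{=}\E[S^{\mathrm{rep}}\mid\mathcal G]$ for some $\sigma$-algebra $\mathcal G$, conditional Jensen gives the domination.
\end{enumerate}
If this construction becomes too delicate, my fallback is to prove the convex ordering by induction on $q$, using exchangeability of a uniformly random ordering of $V$.

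\textbf{Step 2: Hoeffding's lemma.} By independence, $\E[e^{\lambda(S^{\mathrm{rep}}-q\bar x)}]=\prod_{i=1}^q\E[e^{\lambda(X_i-\bar x)}]$. Each factor concerns a mean-zero variable supported in an interval of length $1$, since $x_v\in[0,1]$. Hoeffding's lemma bounds each factor by $e^{\lambda^2/8}$; this follows from convexity of the exponential and a second-order bound on the log-MGF of a two-point distribution. Combining with Step 1 gives $\E[e^{\lambda(S_W-q\bar x)}]\le e^{\lambda^2q/8}$.

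\textbf{Step 3: optimize $\lambda$.} Substituting into the Markov bound gives $\exp(-\lambda q\eta+\lambda^2q/8)$. Choosing $\lambda=4\eta$ yields $e^{-2q\eta^2}$. Dividing the deviation event by $q$ recovers the inequality exactly as stated.
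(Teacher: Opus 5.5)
Your outer structure (exponential Markov bound, reduction from sampling without replacement to sampling with replacement via convex ordering, Hoeffding's lemma, and $\lambda=4\eta$) is the argument of the source the paper cites; the paper gives no proof of its own. Steps~2 and~3 are correct as written. The gap is Step~1, the only nontrivial step, and the coupling you sketch there runs in the wrong direction. In items~1--3 you condition on the with-replacement indices and average a copy $Z$ of the without-replacement sum over completions of $J$. So $g=\E[Z\mid i_1,\dots,i_q]$ is a conditional expectation of something distributed as $S_W$. If you could arrange $g=S^{\mathrm{rep}}$, conditional Jensen would give $\E f(S^{\mathrm{rep}})\le\E f(S_W)$, the reverse of what you need. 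You cannot arrange it anyway. Take $q=2$, a single $x_v=1$, and all other values $0$. Then $S^{\mathrm{rep}}=2$ with probability $N^{-2}>0$, while $S_W\le1$, so every conditional average of a copy of $S_W$ is at most $1$. Item~4 names the correct target: $S_W$ should be a conditional expectation of a variable distributed as $S^{\mathrm{rep}}$. But nothing in items~1--3 produces it, and the inductive fallback is not spelled out. So the convex domination is asserted rather than proved.

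To close the gap, reverse the roles and condition on the without-replacement sample. Let $w_1,\dots,w_q$ be a uniformly random ordering of $W$. Independently, draw $q$ i.i.d.\ uniform indices from $V$ and record only their occupancy pattern: the number $r$ of distinct indices and their multiplicities $m_1,\dots,m_r$ in order of first appearance, so $\sum_j m_j=q$. Every index sequence has probability $N^{-q}$ and corresponds bijectively to a pair (pattern, ordered $r$-tuple of distinct values). Hence, conditionally on the pattern, the distinct values in order of first appearance form a uniform ordered $r$-tuple of distinct elements of $V$, which is also the law of $(w_1,\dots,w_r)$. It follows that $Z'=\sum_{j=1}^r m_j x_{w_j}$ has the law of $S^{\mathrm{rep}}$. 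Given $W$ and the pattern, each $w_j$ is uniform on $W$, so $\E[Z'\mid W,\text{pattern}]=\sum_j m_jS_W/q=S_W$. Conditional Jensen then yields $\E f(S_W)\le\E f(Z')=\E f(S^{\mathrm{rep}})$ for every convex $f$, which is exactly Hoeffding's comparison result for sampling from a finite population. Steps~2 and~3 then finish the proof as you wrote them.
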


% !TEX root = ../main.tex
\section{Construction and Analysis}
\label{sec:algorithm}
We begin with a high-level description of the algorithm.
For each source $s$, let $U$ denote the targets not covered by its
chosen outgoing neighbors. The algorithm tries increasing degree estimates.
At degree estimate $D$, the algorithm first adds outgoing
edges to the corresponding pivots.
Each subsequent iteration runs greedy maximum coverage on
a fresh sample of the uncovered targets to propose at most
$D$ intermediate points. The proposed points are added only if together they cover
at least half of the uncovered targets. If this check fails, the algorithm moves to the next estimate,
retaining all previously added neighbors and the current $U$.

\paragraph{Setup and Conventions.}
We assume $n\ge2$ below; for $n\le1$, the empty graph
$(P,\varnothing)$ is optimal.
For each source $s$, we try increasing \emph{degree estimates}
for its optimum outdegree $k_s$.
We process \emph{stages} $j=0,\ldots,\ell$ in increasing order.
Stage $j$ uses the estimate $D_j$, where
\begin{equation}\label{eq:stage-limits}
 D_j=\min\{2^j,n-1\},\qquad
 \ell=\min\{j\in\mathbb Z_{\ge0}:2^j\ge n-1\}.
\end{equation}

For any set $X\subseteq P$, define
\[
 \mu_X(t)=\min_{x\in X}d(x,t),\qquad
 I_X(p)=\{t\in P:d(p,t)<\mu_X(t)\}.
\]
We interpret minima over empty sets as $+\infty$.
Thus $\mu_X(t)$ is the distance to the nearest pivot, and $I_X(p)$
lists the targets for which $p$ is closer than every pivot in $X$.
In particular, $I_\varnothing(p)=P$.

For each target $t$, sort and store the points as
$p_{t,1},\ldots,p_{t,n}$ so that
\[
 d(p_{t,1},t)\le\cdots\le d(p_{t,n},t),
\]
resolving equal distances by a fixed total order on $P$.
Sorting and storing these orders takes $O(n^2\log n)$ time.

\subsection{Constructing Pivots}
\label{sec:shared}
Fix an integer degree estimate $1\le D\le n$.
We construct two pivot sets $Q,R\subseteq P$, shared by all
sources with this estimate.
For each source $s$, adding edges from $s$ to $Q\setminus\{s\}$ bounds the number of
intermediate points that can cover each target remaining in $U$.
Adding edges from $s$ to $R\setminus\{s\}$ bounds the sizes of the
remaining sets $S_s(p)\cap U$. Both constructions treat every point of $P$ as a potential
intermediate point; when adding outgoing edges from $s$,
we exclude $s$ itself.

\paragraph{Bounding Target Frequency with \texorpdfstring{$Q$}{Q}.}
We construct $Q$ by sampling and verification
(Algorithm~\ref{alg:distance-prefixes}), ensuring that, for every
target $t$, at most $n/D$ points lie strictly closer to $t$
than its nearest pivot.

To explain the role of this property, fix a source $s$ and
let $U$ be its uncovered targets after adding edges from $s$
to $Q\setminus\{s\}$.
The \emph{frequency} of a target $t\in U$ is the number of
intermediate points $p\ne s$ that cover it, namely $|B_s(t)|$.
For every $t\in U$ and $p\in B_s(t)$, we have
\[
 d(p,t)<\frac{d(s,t)}{\alpha}\le\mu_Q(t).
\]
Thus every remaining target has frequency at most $n/D$.
The following lemma establishes the required property of $Q$
and bounds its size and construction time.

\begin{lemma}\label{lem:distance-prefixes}
Algorithm~\ref{alg:distance-prefixes} constructs a pivot set $Q\subseteq P$
and its nearest-pivot distances
$\mu_Q$ satisfying
\[
 |Q|\le8D\log n,\qquad
 |\{p\in P:d(p,t)<\mu_Q(t)\}|\le n/D\quad(t\in P).
\]
The algorithm runs in $O(n^2)$ time with probability at least
$1-n^{-7}$.
\end{lemma}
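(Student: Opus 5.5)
We have not seen Algorithm~\ref{alg:distance-prefixes}, so we take it to be a sample-and-verify loop. It draws $8D\log n$ points independently and uniformly from $P$ and lets $Q$ be the distinct points drawn. It computes $\mu_Q(t)$ for every $t$, counts $|\{p:d(p,t)<\mu_Q(t)\}|$ for every $t$, and accepts if all counts are at most $n/D$; otherwise it resamples. The plan has three parts. First, correctness and the size bound hold deterministically whenever the algorithm returns, because the output is verified and each attempt draws at most $8D\log n$ points. Second, one attempt costs $O(n^2)$. Third, a single attempt succeeds with probability at least $1-n^{-7}$, so the running time is $O(n^2)$ with that probability. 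If the algorithm instead makes a single attempt and falls back to $Q=P$ on failure, the same three parts apply: the fallback still satisfies both guarantees, since $\mu_P\equiv 0$ and $|P|\le 8D\log n$ when $D$ is large.

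\textbf{Cost of one attempt.} Computing $\mu_Q(t)$ for all $t$ scans the $|Q|$ rows of the distance matrix in $O(n|Q|)\le O(n^2)$ time. A cheaper option is to scan each sorted order $p_{t,1},\dots,p_{t,n}$ until the first pivot appears, using a membership bitmap for $Q$. Counting the points strictly closer than $\mu_Q(t)$ is a second scan of row $t$, costing $O(n)$ per target. Both steps together take $O(n^2)$.

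\textbf{Success probability.} Fix $t$ and let $m=\lceil n/D\rceil$. Let $N_t=\{p_{t,1},\dots,p_{t,m}\}$ be the first $m$ points in $t$'s fixed sorted order. Suppose some pivot $q\in N_t$, say $q=p_{t,i}$ with $i\le m$. Any $p$ with $d(p,t)<\mu_Q(t)\le d(q,t)$ appears strictly before position $i$ in the order. Hence at most $i-1\le m-1$ points are strictly closer than $\mu_Q(t)$, and $m-1\le n/D$ because $m<n/D+1$. For $D\le n$ we have $m\ge1$ and $|N_t|/n\ge 1/D$. Each independent uniform draw therefore misses $N_t$ with probability at most $1-1/D$. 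All $8D\log n$ draws miss with probability at most
\[
(1-1/D)^{8D\log n}\le e^{-8\log n}\le n^{-8\cdot\log_2 e}\le n^{-8}.
\]
A union bound over the $n$ targets gives failure probability at most $n^{-7}$.

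The only delicate step is the tie handling. We must use the fixed tie-broken order so that ``hitting one of the $m$ nearest points'' gives the strict-inequality count bound even when there are many ties at distance $\mu_Q(t)$. The argument above does this by counting positions rather than distances. Apart from that, the proof is a routine hitting-set argument.
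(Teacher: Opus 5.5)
Your proof is correct and follows the paper's argument. The size and prefix bounds hold because the output is verified, and a pass costs $O(n|Q|)+O(n^2)=O(n^2)$. Hitting one of the first $\lceil n/D\rceil$ positions in the tie-broken order around $t$ fails with probability at most $(1-1/D)^{8D\log n}\le n^{-8}$, so a union bound makes the first pass succeed with probability at least $1-n^{-7}$.

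The actual algorithm returns $Q=P$, $\mu_Q\equiv0$ up front exactly when $8D\log n\ge n$, and your remark about $\mu_P\equiv0$ and $|P|\le 8D\log n$ already covers that case. Note that your hypothetical fallback to $P$ after a failed pass with small $D$ would break the size bound, but the algorithm never does this.
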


% Algorithm~\ref{alg:distance-prefixes} takes the distance matrix and
% $D$ as input. If $8D\log n\ge n$, it returns $Q=P$ and
% $\mu_Q(t)=0$ for every target $t$. Otherwise, it draws
% $8D\log n$ points independently and uniformly from $P$ with replacement,
% lets $Q$ be the set of distinct sampled points, and computes
% $\mu_Q(t)=\min_{p\in Q}d(p,t)$ for every $t\in P$.
% It then checks whether at most $n/D$ points satisfy
% $d(p,t)<\mu_Q(t)$ for each target $t$. If every target passes this
% check, it returns $(Q,\mu_Q)$; otherwise, it discards $Q$ and
% repeats with fresh samples.

\begin{algorithm}[H]
\small
\caption{Constructing the pivot set $Q$}\label{alg:distance-prefixes}
\KwIn{Distance matrix and degree estimate $D$.}
\KwOut{A pivot set $Q$ and its nearest-pivot distances $\mu_Q$.}
\lIf{$8D\log n\ge n$}{\Return{$(P,\boldsymbol{0})$}}
\While{true}{
 Draw $8D\log n$ independent uniform points from $P$ with replacement\;
 $Q\gets$ the set of distinct sampled points\;
 $\mu_Q(t)\gets\min_{p\in Q}d(p,t)$ for every $t\in P$\;
 \If{$|\{p\in P:d(p,t)<\mu_Q(t)\}|\le n/D$ for every $t\in P$\nllabel{ln:q-check}}{
  \Return{$(Q,\mu_Q)$}\;
 }
 \tcp{Discard this failed pass and restart with fresh samples.}
}
\end{algorithm}

\begin{proof}
\noindent\emph{Output properties.}
If $8D\log n\ge n$, the algorithm returns $Q=P$ and $\mu_Q(t)=0$.
Then $|Q|=n\le8D\log n$, and no point $p$ satisfies
$d(p,t)<\mu_Q(t)$ for any target $t$.
Otherwise, $Q$ consists of the distinct points among
$8D\log n$ draws, so $|Q|\le8D\log n$.
The algorithm returns a sampled set $Q$ only after verifying
that, for every target $t$, at most $n/D$ points satisfy
$d(p,t)<\mu_Q(t)$.
For every target left uncovered by the pivot edges,
$B_s(t)\subseteq\{p\in P:d(p,t)<\mu_Q(t)\}$,
so its frequency is at most $n/D$.

\noindent\emph{Running time.}
A pass is one iteration of the \textbf{while} loop in
Algorithm~\ref{alg:distance-prefixes}. Fix a pass and all random
choices made before it starts.

For $D=1$, the bound holds automatically. 
Otherwise, fix a target $t$. If more than $n/D$ points satisfy
$d(p,t)<\mu_Q(t)$, every sampled point must appear after
more than $n/D$ points in the stored distance order around $t$.
The fraction of $P$ occupying such positions is at most
$1-1/D$. Independence of the draws therefore gives
\[
 \Prob\bigl(|\{p\in P:d(p,t)<\mu_Q(t)\}|>n/D\bigr)
 \le\left(1-\frac1D\right)^{8D\log n}
 \le e^{-8\log n}
 \le n^{-8},
\]
using $\log n\ge\ln n$.
A union bound over all $n$ targets gives a failure probability
of at most $n^{-7}$.

Computing $\mu_Q$ takes $O(n|Q|)=O(n^2)$ time, and the
check in Line~\ref{ln:q-check} of Algorithm~\ref{alg:distance-prefixes}
requires one scan of the distance matrix.
Hence one pass costs $O(n^2)$ time.
With probability at least $1-n^{-7}$, the first pass succeeds,
giving the same bound on the total time.
The same running-time bound holds in expectation.
\end{proof}

\Needspace{10\baselineskip}
\paragraph{Bounding Set Sizes with \texorpdfstring{$R$}{R}.}
We construct $R$ by adaptive sampling
(Algorithm~\ref{alg:improvement-sets}), using a variant of
Thorup and Zwick's center-selection method%
~\cite[Theorem~3.1]{ThorupZwick01}.
The construction ensures that $|I_R(p)|\le n/D$ for every
$p\in P$.

To explain the role of this property, fix a source $s$ and
let $U$ be its uncovered targets after adding edges from $s$
to $R\setminus\{s\}$.
Every $t\in U$ satisfies $d(s,t)/\alpha\le\mu_R(t)$, so
\[
 S_s(p)\cap U\subseteq I_R(p)
 \qquad(p\in P\setminus\{s\}).
\]
Thus every remaining set has size at most $n/D$.
The following lemma establishes the required property of $R$
and bounds its size and construction time.

\begin{lemma}\label{lem:improvement-sets}
Algorithm~\ref{alg:improvement-sets} constructs a pivot set $R\subseteq P$
and its nearest-pivot distances
$\mu_R(t)=\min_{p\in R}d(p,t)$ such that
\[
 |R|\le16D\log n,\qquad
 |I_R(p)|\le n/D\quad(p\in P).
\]
The algorithm runs in $O(n^2\log n)$ time with probability at least
$1-n^{-7}$.
\end{lemma}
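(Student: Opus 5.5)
The plan is to analyze the adaptive-sampling procedure of Algorithm~\ref{alg:improvement-sets} as described in the overview. Start with $R=\varnothing$ and $A=P$. While $|A|>2D$, draw $2D$ uniform points from $A$ with replacement, add them to $R$, recompute $\mu_R$, and delete from $A$ every satisfied point, that is, every $p$ with $|I_R(p)|\le n/D$. Once $|A|\le2D$, add all of $A$ to $R$. The algorithm should also cap the number of rounds at $4\log n$; if $A$ is still too large, it restarts with fresh randomness, just as Algorithm~\ref{alg:distance-prefixes} does.

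\emph{Correctness} is deterministic. Since $\mu_R$ can only decrease as $R$ grows, each list $I_R(p)$ only shrinks, so a point that has been removed stays satisfied. Every $p\in R$ has $\mu_R(p)=0$ and hence $I_R(p)=\varnothing$, so the final insertion of $A$ satisfies all remaining points. The size bound follows because at most $4\log n$ rounds each add at most $2D$ points, and the final step adds at most $2D$ more. This gives $|R|\le 8D\log n+2D\le16D\log n$ for $n\ge2$.

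\emph{Contraction.} Condition on the state $(R,A)$ before a round, and let $B$ be the multiset of $2D$ draws from $A$. Fix a target $t$ and order $A$ by distance to $t$ using the stored tie-broken order. A point $a\in A$ at rank $i$ survives in the new $I_{R\cup B}(a)$ only if $d(a,t)<\mu_B(t)$. This requires every draw to fall strictly after $a$ in that order, and each draw does so with probability at most $1-i/|A|$. Summing $(1-i/|A|)^{2D}$ over $i$ and comparing with an integral gives at most $|A|/(2D+1)$ expected survivors per target. I expect the main care to lie here, in handling ties with the fixed order: if $d(a,t)<\mu_B(t)$, then every draw is strictly farther from $t$ than $a$, so it comes later in the order, and the bound goes through.

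Summing over the $n$ targets, the expected value of $\sum_{a\in A}|I_{R\cup B}(a)|$ is at most $n|A|/(2D+1)$. Each unsatisfied point contributes more than $n/D$ to this sum, so by Markov's inequality $\E|A_{\mathrm{next}}|\le \frac{D}{2D+1}|A|<|A|/2$.

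Iterating this conditional bound gives $\E|A_r|\le n2^{-r}$ after $r$ rounds. With $r=4\log n$, this is at most $n^{-3}$. Hence $\Prob(|A_r|\ge1)\le n^{-3}$, and the loop ends within the cap except with that probability. To reach the stated $1-n^{-7}$, I would instead set the cap at $8\log n+1$ rounds, which makes $\E|A_r|\le n^{-7}$. This also fits the $16D\log n$ size bound, since $(8\log n+1)\cdot 2D+2D\le16D\log n+4D$. Any slack can be absorbed by the paper's rounding conventions, or by adjusting the constants and using rounds of $2D$ draws.

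\emph{Running time.} Each round recomputes $\mu_R$ incrementally in $O(n|B|)$ time. It then computes $|I_R(p)|$ for every $p\in A$ with one scan of the distance matrix, checking $d(p,t)<\mu_R(t)$, which costs $O(n^2)$. With $O(\log n)$ rounds, the total is $O(n^2\log n)$ with probability at least $1-n^{-7}$.
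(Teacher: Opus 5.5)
Your argument is essentially the paper's proof. It uses the same per-target estimate $\sum_{j}\bigl((|A|-j)/|A|\bigr)^{2D}\le|A|/(2D+1)$ via the tie-broken distance order. It uses the same double count, giving $\E|A_i|\le\frac{D}{2D+1}|A_{i-1}|\le\frac12|A_{i-1}|$, then Markov on $|A_{8\log n}|$ and a restart of any failed pass.

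The one thing to repair is the end-game bookkeeping. Algorithm~\ref{alg:improvement-sets} has no separate final insertion. The case $|A|\le2D$ is one of its $8\log n$ rounds, with $B=A$, so $A$ becomes empty and the contraction holds trivially. That convention is also what justifies iterating to $\E|A_r|\le n2^{-r}$: as you set it up, the contraction was only shown for sampling rounds.

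With this convention, $n2^{-8\log n}=n^{-7}$ already, so no extra round is needed, and $|R|\le 2D\cdot 8\log n=16D\log n$ exactly. Your $8\log n+1$ sampling rounds plus a final step give only $|R|\le16D\log n+4D$. The ceiling convention does not absorb an additive $4D$ overshoot.

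Finally, $I_R(p)=\varnothing$ for $p\in R$ holds because $\mu_R(t)\le d(p,t)$ for every target $t$, not because $\mu_R(p)=0$.
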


% Algorithm~\ref{alg:improvement-sets} takes the distance matrix and
% $D$ as input. Each pass starts with $R=\varnothing$,
% $\mu_R(t)=+\infty$ for every $t\in P$, and a pool $A=P$ of
% points still to process. The pass runs for at most $8\log n$ rounds.
% In each round, choose a set $B$ of new pivots as follows. If
% $|A|\le2D$, set $B=A$. Otherwise, draw $2D$ points independently
% and uniformly from $A$ with replacement, and let $B$ be the set
% of distinct sampled points. Add $B$ to $R$ and update each
% nearest-pivot distance to $\min\{\mu_R(t),\min_{p\in B}d(p,t)\}$.

% With these updated distances, call a point $p$ \emph{satisfied}
% if $|I_R(p)|\le n/D$, and remove all satisfied points from $A$.
% If $A$ is empty, return $(R,\mu_R)$. If $A$ is still nonempty
% after $8\log n$ rounds, discard the pass and restart from
% $A=P$ and $R=\varnothing$, resetting all nearest-pivot distances
% to $+\infty$ and using fresh samples.

\begin{algorithm}[H]
\small
\caption{Constructing the pivot set $R$}\label{alg:improvement-sets}
\KwIn{Distance matrix and degree estimate $D$.}
\KwOut{A pivot set $R$ and its nearest-pivot distances $\mu_R$.}
\While{true}{
 $A\gets P$; $R\gets\varnothing$; $\mu_R(t)\gets+\infty$ for all $t\in P$\;
 \For{$i\gets1$ \KwTo $8\log n$}{
  \eIf{$|A|\le2D$}{
   $B\gets A$\;
  }{
   Draw $2D$ independent uniform points from $A$ with replacement\;
   $B\gets$ the set of distinct sampled points\;
  }
  $R\gets R\cup B$\;
  \ForEach{$t\in P$}{
   $\mu_R(t)\gets\min\bigl(\mu_R(t),\min_{p\in B}d(p,t)\bigr)$\;
  }
  $A\gets\{p\in A:|I_R(p)|>n/D\}$\;
  \lIf{$A=\varnothing$}{\Return{$(R,\mu_R)$}}
 }
 \tcp{Discard this failed pass and restart with fresh samples.}
}
\end{algorithm}

\begin{proof}
We analyze one \emph{pass}, an iteration of the outer \textbf{while}
loop in Algorithm~\ref{alg:improvement-sets}. Each pass starts from
the same state and uses fresh random samples.
Let $A_i$ and $R_i$ denote the sets $A$ and $R$ after round $i$,
respectively; $\mu_{R_i}$ denotes the corresponding
nearest-pivot distances.
In particular, $A_0=P$, $R_0=\varnothing$, and
$\mu_{R_0}(t)=+\infty$. For analysis, if $A$ becomes empty early,
keep these values unchanged through the remaining rounds.

\noindent\emph{Output properties.}
The updates maintain
$\mu_{R_i}(t)=\min_{p\in R_i}d(p,t)$.
These distances never increase because $R_i$ only grows.
Call an intermediate point $p$ \emph{satisfied} after round $i$
if $|I_{R_i}(p)|\le n/D$.
Every point removed from $A$ is satisfied and remains so,
since its list can only shrink as the nearest-pivot distances
decrease.
Thus, when $A$ becomes empty, every intermediate point is satisfied.
Each round adds at most $2D$ pivots, and a pass contains at most
$8\log n$ rounds, giving $|R|\le16D\log n$.

\noindent\emph{Contraction in one round.}
Condition on the random choices made before round $i$,
so that $A_{i-1}$ is fixed.
If $|A_{i-1}|\le2D$, every point of $A_{i-1}$ is added to $R$.
Hence $I_{R_i}(p)=\varnothing$ for every $p\in A_{i-1}$,
and $A_i=\varnothing$.

Suppose $|A_{i-1}|>2D$. We bound $|A_i|$ by
counting pairs $(p,t)$ with $p\in A_{i-1}$ and
$d(p,t)<\mu_{R_i}(t)$ in two ways.
Fix a target $t$, and list $A_{i-1}$ as
$p_1,\ldots,p_{|A_{i-1}|}$ in the stored distance order around $t$.
For $d(p_j,t)<\mu_{R_i}(t)$ to hold, all $2D$ draws must
come from points strictly farther from $t$ than $p_j$.
There are at most $|A_{i-1}|-j$ such points.
Thus, independence of the draws and linearity of expectation give
\[
 \begin{aligned}
 \E\bigl|\{p\in A_{i-1}:d(p,t)<\mu_{R_i}(t)\}\bigr|
 &\le \sum_{j=1}^{|A_{i-1}|}
       \left(\frac{|A_{i-1}|-j}{|A_{i-1}|}\right)^{2D}\\
 &\le |A_{i-1}|\int_0^1 x^{2D}\,dx
 =\frac{|A_{i-1}|}{2D+1}.
 \end{aligned}
\]

Every point in $A_i$ has more than $n/D$ targets in its list
and therefore contributes more than $n/D$ such pairs.
Counting these pairs first by intermediate point and then by target gives
\[
 \begin{aligned}
 \frac nD|A_i|
 &\le \sum_{p\in A_{i-1}}
       |I_{R_i}(p)|\\
 &= \sum_{t\in P}
       |\{p\in A_{i-1}:d(p,t)<\mu_{R_i}(t)\}|.
 \end{aligned}
\]
Taking expectations and applying the bound for each target yields
\[
 \E|A_i|
 \le \frac{D|A_{i-1}|}{2D+1}
 \le \frac{|A_{i-1}|}{2}.
\]
Averaging over the random choices in the preceding rounds gives
\[
 \E|A_i|\le\frac12\E|A_{i-1}|,
 \qquad\text{and hence}\qquad
 \E|A_i|\le n2^{-i}.
\]
Here the expectations are over the random draws within this pass.

\noindent\emph{Running time.}
A pass fails only if $A_{8\log n}\ne\varnothing$.
By Markov's inequality,
\[
 \Prob\bigl(A_{8\log n}\ne\varnothing\bigr)
 \le\E|A_{8\log n}|
 \le n2^{-8\log n}
 \le n^{-7}.
\]

Each round takes
$O(n|B|+n|A|+n)=O(n^2)$ time, including distance updates,
filtering, and all other operations. Hence one pass takes
$O(n^2\log n)$ time. With probability at least $1-n^{-7}$,
the first pass succeeds, giving the same bound on the total time.
Fresh retries require at most $(1-n^{-7})^{-1}\le2$ passes in
expectation, so the expected running time is also $O(n^2\log n)$.
\end{proof}

\paragraph{Combining the Two Pivot Sets.}
The following lemma combines the two pivot guarantees and
includes the explicit construction of all lists $I_{Q\cup R}(p)$.

\begin{lemma}\label{lem:shared}
For every degree estimate $1\le D\le n$, the combined pivot
set $Q\cup R$, the lists $I_{Q\cup R}(p)$, and the nearest-pivot
distances $\mu_{Q\cup R}$ satisfy
\begin{align*}
 |Q\cup R|&\le24D\log n,\\
 |I_{Q\cup R}(p)|&\le n/D\quad(p\in P),\\
 |\{p\in P:d(p,t)<\mu_{Q\cup R}(t)\}|&\le n/D\quad(t\in P).
\end{align*}
These data can be constructed in
$O(n^2\log n)$ time with probability at least $1-2n^{-7}$.
All lists $I_{Q\cup R}(p)$ are explicitly constructed within this time bound.
\end{lemma}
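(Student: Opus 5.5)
The plan is to run Algorithm~\ref{alg:distance-prefixes} and Algorithm~\ref{alg:improvement-sets} independently for the given $D$, set $X=Q\cup R$, and derive all three bounds from Lemmas~\ref{lem:distance-prefixes} and~\ref{lem:improvement-sets} using monotonicity in the pivot set. The size bound is immediate: $|Q\cup R|\le|Q|+|R|\le 8D\log n+16D\log n=24D\log n$. For the other two bounds, note that $\mu_X(t)=\min\{\mu_Q(t),\mu_R(t)\}$. Hence $X\supseteq R$ gives $\mu_X\le\mu_R$ pointwise, so $I_X(p)\subseteq I_R(p)$ and $|I_X(p)|\le n/D$. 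Likewise $\mu_X\le\mu_Q$ gives
\[
\{p\in P:d(p,t)<\mu_X(t)\}\subseteq\{p\in P:d(p,t)<\mu_Q(t)\},
\]
so the prefix bound of Lemma~\ref{lem:distance-prefixes} carries over.

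For the construction, I would first compute $\mu_X(t)$ for every $t$ in $O(n)$ time as the minimum of the two stored arrays. Then I build all lists $I_X(p)$ explicitly by one scan of the distance matrix. For each target $t$, walk its stored sorted order $p_{t,1},p_{t,2},\ldots$ while $d(p_{t,j},t)<\mu_X(t)$, appending $t$ to the list of $p_{t,j}$. Alternatively, simply test every pair $(p,t)$ and append $t$ to $I_X(p)$ when $d(p,t)<\mu_X(t)$. Either way this takes $O(n^2)$ time, and the total list length is at most $n\cdot n/D$.

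For the time and probability accounting: with probability at least $1-n^{-7}$ the construction of $Q$ finishes in $O(n^2)$ time, and with probability at least $1-n^{-7}$ the construction of $R$ finishes in $O(n^2\log n)$ time. A union bound gives both with probability at least $1-2n^{-7}$, for a total of $O(n^2\log n)$ including the $O(n^2)$ list construction. This already counts the cost of the case $8D\log n\ge n$, where $Q=P$ and $\mu_Q\equiv0$. The output properties hold deterministically whenever the algorithms return, since both verify their guarantees before returning.

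There is no real obstacle here. The only point needing care is the direction of the monotonicity, namely that adding pivots only shrinks the lists and the prefixes. A minor bookkeeping issue is that $I_X(p)$ ranges over all $t\in P$, including $t=p$ and pivots. This is harmless, because for $x\in X$ we have $\mu_X(x)=0$, so no pivot appears in any list.
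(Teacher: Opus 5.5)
Your proposal is correct and follows the paper's proof essentially step for step. The size bound comes from $|Q|+|R|$, both the list and prefix bounds follow from the monotonicity $\mu_{Q\cup R}=\min\{\mu_Q,\mu_R\}$, the probability comes from a union bound over the two constructions, and the lists $I_{Q\cup R}(p)$ are built explicitly by an $O(n^2)$ scan of all pairs (your variant of walking each target's sorted order until $d(p_{t,j},t)\ge\mu_{Q\cup R}(t)$ is also valid, and even faster, but not needed).
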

\begin{proof}
The size bounds on $Q$ and $R$ give
\[
 |Q\cup R|\le|Q|+|R|\le24D\log n.
\]
For each intermediate point $p$, the inequality
$\mu_{Q\cup R}(t)\le\mu_R(t)$ implies $I_{Q\cup R}(p)\subseteq I_R(p)$.
Hence $|I_{Q\cup R}(p)|\le n/D$ by Lemma~\ref{lem:improvement-sets}.
For each target $t$, the inequality $\mu_{Q\cup R}(t)\le\mu_Q(t)$ gives
\[
 \{p\in P:d(p,t)<\mu_{Q\cup R}(t)\}
 \subseteq\{p\in P:d(p,t)<\mu_Q(t)\}.
\]
The latter set has size at most $n/D$ by
Lemma~\ref{lem:distance-prefixes}, proving the third bound.

By the same lemmas and a union bound, constructing $Q$ and $R$
takes $O(n^2\log n)$ time with probability at least $1-2n^{-7}$.
The same total time bound holds in expectation.
Scan all pairs $(p,t)$ and append $t$ to $I_{Q\cup R}(p)$ when
$d(p,t)<\mu_{Q\cup R}(t)$; this takes $O(n^2)$ additional time.
\end{proof}

For stage $j$, write $Q_j$ and $R_j$ for the pivot sets of
Lemma~\ref{lem:shared} with $D=D_j$, with nearest-pivot distances
$\mu_{Q_j\cup R_j}$ and lists $I_{Q_j\cup R_j}(p)$.

\subsection{Sampled Greedy Coverage}
\label{sec:sampled-call}
This subsection presents the procedure $\operatorname{Cover}_j(s,U)$
in Algorithm~\ref{alg:sampled-call}.
At stage $j$, it uses greedy maximum coverage
on a sample of the uncovered targets $U$ to propose at most $D_j$ intermediate points.
It then computes their coverage in $U$ and returns the points
together with the covered targets if at least half of $U$
is covered; otherwise, it returns $\bot$.

\paragraph{Parameters and Input.}
At stage $j$, the procedure takes a source $s$ and a nonempty set
$U\subseteq P\setminus\{s\}$ satisfying
\begin{equation}\label{eq:residual-condition}
 \alpha\mu_{Q_j\cup R_j}(t)\ge d(s,t)\qquad(t\in U).
\end{equation}
This condition states that none of the stage's pivot edges
covers a target in $U$.
It draws a uniformly random subset $T\subseteq U$ of size
\[
 |T|=\min\bigl\{|U|,256D_j\ln n\bigr\}.
\]
In particular, $T=U$ when $|U|\le256D_j\ln n$.

\paragraph{Incidence Graphs.}
For a fixed source $s$, the remaining Set Cover instance has targets
$U$ and sets $S_s(p)\cap U$. We represent it by a bipartite graph.

\begin{definition}[Incidence graph]\label{def:incidence-graph}
For a source $s$ and a target set $U\subseteq P\setminus\{s\}$,
the \emph{incidence graph} $G_U$ is the bipartite graph with
vertex and edge sets
\[
\begin{aligned}
 V(G_U)&=(P\setminus\{s\})\sqcup U,\\
 E(G_U)&=\{(p,t)\in(P\setminus\{s\})\times U:
                    \alpha d(p,t)<d(s,t)\}.
\end{aligned}
\]
Here $\sqcup$ denotes disjoint union: the left vertex class is a
copy of $P\setminus\{s\}$ representing intermediate points, and
the right vertex class is a copy of $U$ representing targets.
The two copies of the same point are distinct vertices; in
particular, $(t,t)$ joins the left copy of $t$ to its right copy.
\end{definition}

The neighbors of a target $t$ in $G_U$ are $B_s(t)$, and the
neighbors of an intermediate point $p$ are $S_s(p)\cap U$.
Under~\eqref{eq:residual-condition}, Claim~\ref{clm:coverage-lists}
bounds both degrees by $n/D_j$. We access $G_U$ implicitly.
For each $t\in U$, scan the stored points in nondecreasing
order of distance from $t$, stopping at the first point $p$
with $\alpha d(p,t)\ge d(s,t)$.
The points preceding it are exactly $B_s(t)$.
To find $S_s(p)\cap U$, scan $I_{Q_j\cup R_j}(p)$ and retain
the targets $t\in U$ satisfying $\alpha d(p,t)<d(s,t)$.

Restricting the target set from $U$ to $T$ gives the sampled
Set Cover instance, whose sets are $S_s(p)\cap T$.

\Needspace{10\baselineskip}
\begin{definition}[Sampled incidence graph]\label{def:sampled-graph}
For a sample $T\subseteq U$, the \emph{sampled incidence graph}
$G_T$ is the subgraph of $G_U$ obtained by deleting the target
vertices in $U\setminus T$ and their incident edges.
All left vertices are retained.
It represents the Set Cover instance with universe $T$
and sets $S_s(p)\cap T$.
\end{definition}

Construct $G_T$ explicitly by enumerating $B_s(t)$ for each
$t\in T$ and adding the edges $(p,t)$ for $p\in B_s(t)$.
Write $N_{G_T}(v)$ for the neighbors of
a vertex $v$. Here $p$ denotes a left vertex and $t$ a right vertex,
so
\[
 N_{G_T}(p)=T\cap S_s(p),\qquad
 N_{G_T}(t)=B_s(t)\quad(t\in T).
\]
Each sampled target has at most $n/D_j$ neighbors, so
$G_T$ has at most $n|T|/D_j=O(n\log n)$ edges.

\paragraph{Greedy and Validation Steps.}
After constructing $G_T$, the procedure has two steps.
The \emph{greedy step}
(Lines~\ref{ln:greedy-start}--\ref{ln:greedy-end}) starts with $S=\varnothing$ and
$T_{\mathrm{unc}}=T$, where $T_{\mathrm{unc}}$ contains the sampled
targets not yet covered by $S$. While $|S|<D_j$ and
$T_{\mathrm{unc}}\ne\varnothing$, choose an intermediate point
$p\ne s$ maximizing $|N_{G_T}(p)\cap T_{\mathrm{unc}}|$, resolving
ties arbitrarily. Add $p$ to $S$ and remove its neighbors from
$T_{\mathrm{unc}}$. Every choice has positive gain: a target
$t\in T_{\mathrm{unc}}$ is covered by the intermediate point $p=t$.
Thus the selected intermediate points are distinct.

The \emph{validation step}
(Lines~\ref{ln:validation-start}--\ref{ln:validation-end}) computes the targets in $U$ covered by
$S$. Initialize $C=\varnothing$. For each $p\in S$, scan
$I_{Q_j\cup R_j}(p)$ and insert into $C$ every target $t\in U$
satisfying $\alpha d(p,t)<d(s,t)$. Return $(S,C)$ if
$|C|\ge|U|/2$, and return $\bot$ otherwise.

\begin{algorithm}[H]
\small
\caption{$\operatorname{Cover}_j(s,U)$}
\label{alg:sampled-call}
\KwIn{Source $s\in P$ and nonempty $U\subseteq P\setminus\{s\}$
satisfying~\eqref{eq:residual-condition} at stage $j$.}
\SetKwInOut{KwOut}{Output}
\KwOut{$(S,C)$ with $|S|\le D_j$,
$C=U\cap\bigcup_{p\in S}S_s(p)$, and $|C|\ge|U|/2$;\\
or $\bot$ if the chosen points cover fewer than $|U|/2$ targets of $U$.}
\eIf{$|U|\le256D_j\ln n$}{
 $T\gets U$\;
}{
 Draw $256D_j\ln n$ targets uniformly without replacement from $U$ to form $T$\;
}
Initialize $G_T$ with vertex set $(P\setminus\{s\})\sqcup T$ and no edges\;
\ForEach{$t\in T$}{
 Enumerate $B_s(t)$ and add $(p,t)$ to $E(G_T)$ for every $p\in B_s(t)$\;
}
$S\gets\varnothing$; $T_{\mathrm{unc}}\gets T$\nllabel{ln:greedy-start}\;
\While{$|S|<D_j$ and $T_{\mathrm{unc}}\ne\varnothing$}{
 Choose $p\in P\setminus\{s\}$ maximizing $|N_{G_T}(p)\cap T_{\mathrm{unc}}|$\;
 $S\gets S\cup\{p\}$\;
 $T_{\mathrm{unc}}\gets T_{\mathrm{unc}}\setminus N_{G_T}(p)$\nllabel{ln:greedy-end}\;
}
$C\gets\varnothing$\nllabel{ln:validation-start}\;
\ForEach{$p\in S$}{
 $C\gets C\cup \{t\in U\cap I_{Q_j\cup R_j}(p):
  \alpha d(p,t)<d(s,t)\}$\;      
}
\lIf{$|C|\ge|U|/2$}{\Return{$(S,C)$}}
\Return{$\bot$}\nllabel{ln:validation-end}\;
\end{algorithm}
The following claim shows how the pivot bounds support
sampled incidence graph construction and exact validation.

\begin{claim}\label{clm:coverage-lists}
For a source $s$ and a nonempty set $U\subseteq P\setminus\{s\}$
satisfying~\eqref{eq:residual-condition} at stage $j$,
\[
 \begin{aligned}
 |B_s(t)|&\le n/D_j &&(t\in U),\\
 U\cap S_s(p)&\subseteq I_{Q_j\cup R_j}(p) &&(p\in P\setminus\{s\}),\\
 |I_{Q_j\cup R_j}(p)|&\le n/D_j &&(p\in P\setminus\{s\}).
 \end{aligned}
\]
For each $t\in U$, $B_s(t)$ can be
enumerated in $O(n/D_j)$ time using the stored distance order around $t$.
\end{claim}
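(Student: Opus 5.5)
The plan is to derive all three bounds from condition~\eqref{eq:residual-condition} together with Lemma~\ref{lem:shared}, and then read off the enumeration bound from the stored distance orders. The key observation is that~\eqref{eq:residual-condition} says $d(s,t)/\alpha\le\mu_{Q_j\cup R_j}(t)$ for every $t\in U$. So any intermediate point covering an uncovered target is strictly closer to that target than every pivot.

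\emph{Frequency bound.} Fix $t\in U$ and $p\in B_s(t)$. By definition $\alpha d(p,t)<d(s,t)$, so
\[
 d(p,t)<\frac{d(s,t)}{\alpha}\le\mu_{Q_j\cup R_j}(t).
\]
Hence $B_s(t)\subseteq\{p\in P:d(p,t)<\mu_{Q_j\cup R_j}(t)\}$. The third bound of Lemma~\ref{lem:shared}, applied with $D=D_j$, gives $|B_s(t)|\le n/D_j$.

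\emph{Set-size bounds.} Fix $p\in P\setminus\{s\}$ and $t\in U\cap S_s(p)$. The same chain of inequalities gives $d(p,t)<\mu_{Q_j\cup R_j}(t)$, which is exactly the condition $t\in I_{Q_j\cup R_j}(p)$. This proves the inclusion $U\cap S_s(p)\subseteq I_{Q_j\cup R_j}(p)$. The bound $|I_{Q_j\cup R_j}(p)|\le n/D_j$ is then the second bound of Lemma~\ref{lem:shared} restricted to $p\ne s$.

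\emph{Enumeration.} For $t\in U$, scan $p_{t,1},p_{t,2},\ldots$ and stop at the first index with $\alpha d(p_{t,i},t)\ge d(s,t)$, skipping $s$ if it appears. Because the order is nondecreasing in $d(\cdot,t)$, the condition $\alpha d(p,t)<d(s,t)$ holds exactly on a prefix of the order. That prefix, with $s$ removed, is precisely $B_s(t)$.

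There is one subtlety: $s$ itself lies in this prefix only if $\alpha d(s,t)<d(s,t)$, which is impossible for $\alpha\ge1$ since $d(s,t)>0$. So $s$ never appears before the stopping point. The scan therefore visits $|B_s(t)|+1\le n/D_j+1$ entries and runs in $O(n/D_j)$ time, using $n/D_j\ge1$.

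I expect no real obstacle. The only points needing care are this prefix/monotonicity argument for the stopping rule, including that $s$ cannot appear before stopping, and correctly matching the strict inequalities with the definitions of $\mu$ and $I$.
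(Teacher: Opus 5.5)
Your proposal is correct and takes essentially the same route as the paper. The chain $d(p,t)<d(s,t)/\alpha\le\mu_{Q_j\cup R_j}(t)$ gives both inclusions, Lemma~\ref{lem:shared} supplies the two size bounds, and the scan of the stored distance order stops after at most $|B_s(t)|+1$ comparisons; your explicit check that $s$ cannot appear before the stopping point (because $\alpha d(s,t)\ge d(s,t)$) is a detail the paper leaves implicit, and you handle it correctly.
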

\begin{proof}
If $p\in B_s(t)$ for $t\in U$, then
\[
 d(p,t)<d(s,t)/\alpha\le\mu_{Q_j\cup R_j}(t).
\]
Thus $B_s(t)\subseteq\{p\in P:d(p,t)<\mu_{Q_j\cup R_j}(t)\}$
and $t\in I_{Q_j\cup R_j}(p)$. Lemma~\ref{lem:shared} bounds
both the set of intermediate points and the list by $n/D_j$.
To enumerate $B_s(t)$, scan the stored distance order around $t$
until the first point $p$ satisfying $\alpha d(p,t)\ge d(s,t)$.
This takes at most $|B_s(t)|+1=O(n/D_j)$ comparisons.
\end{proof}

\begin{lemma}\label{lem:validation}
Algorithm~\ref{alg:sampled-call} returns either $\bot$ or a pair
$(S,C)$ satisfying
\[
 |S|\le D_j,\qquad
 C=U\cap\bigcup_{p\in S}S_s(p),\qquad |C|\ge|U|/2.
\]
It runs in $O(n\log n)$ time.
\end{lemma}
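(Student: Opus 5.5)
We split the proof into correctness of the output and a running-time bound. Both rely on Claim~\ref{clm:coverage-lists}, which applies because the input $U$ satisfies~\eqref{eq:residual-condition}.

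\emph{Correctness.} The greedy loop runs only while $|S|<D_j$, so $|S|\le D_j$. For the identity $C=U\cap\bigcup_{p\in S}S_s(p)$ we argue each inclusion. Every $t$ inserted into $C$ lies in $U$ and satisfies $\alpha d(p,t)<d(s,t)$ for some $p\in S$. Since $t\ne s$, this means $t\in S_s(p)$, which gives $C\subseteq U\cap\bigcup_{p\in S}S_s(p)$. Conversely, suppose $t\in U\cap S_s(p)$ with $p\in S$. Claim~\ref{clm:coverage-lists} gives $t\in I_{Q_j\cup R_j}(p)$, so the scan of that list meets $t$ and the test $\alpha d(p,t)<d(s,t)$ succeeds. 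Hence $t$ is inserted into $C$. Finally, the algorithm returns $(S,C)$ only after checking $|C|\ge|U|/2$, and otherwise it returns $\bot$.

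\emph{Running time.} We charge each phase separately, using $|T|\le 256D_j\ln n$.
\begin{enumerate}
\item \emph{Sampling.} Drawing $T$ without replacement takes $O(|U|)=O(n)$ time, for example by a partial Fisher--Yates shuffle of an array holding $U$. We keep a membership bitmap for $U$, which gives $O(1)$ membership tests.
\item \emph{Building $G_T$.} Claim~\ref{clm:coverage-lists} enumerates $B_s(t)$ in $O(n/D_j)$ time for each $t\in T$. Building $G_T$ therefore costs $O(|T|\cdot n/D_j)=O(n\log n)$, and $G_T$ has $O(n\log n)$ edges. We store it as adjacency lists on both sides, so each left vertex $p$ also holds its list $N_{G_T}(p)$.
\item \emph{Greedy.} This is the main obstacle: each step must find a maximizer of $|N_{G_T}(p)\cap T_{\mathrm{unc}}|$ without rescanning all $n$ points. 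We use bucket-based greedy. Each $p$ keeps a counter $c(p)=|N_{G_T}(p)\cap T_{\mathrm{unc}}|$ and sits in a doubly linked bucket indexed by $c(p)$, with counters bounded by $|T|$. A pointer tracks the maximum nonempty bucket and only moves downward, since counters never increase. Its total movement is $O(|T|)$, and the buckets are initialized in $O(n+|T|)$ time.

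When $p$ is chosen, we walk $N_{G_T}(p)$. For each target $t$ still in $T_{\mathrm{unc}}$, we mark $t$ covered and then walk $N_{G_T}(t)=B_s(t)$, decrementing each counter and moving each point to its new bucket in $O(1)$. Every target becomes covered at most once, so the total work is $O(|E(G_T)|+n+|T|)=O(n\log n)$. Positive gains keep the chosen points distinct, as the paper notes.
\item \emph{Validation.} For each of the at most $D_j$ points $p\in S$, we scan $I_{Q_j\cup R_j}(p)$, which has size at most $n/D_j$ by Claim~\ref{clm:coverage-lists}. Each entry gets an $O(1)$ test, using the bitmap for $U$ and a marker array to insert into $C$ without duplicates. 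This costs $O(D_j\cdot n/D_j)=O(n)$. The marker array is reset afterwards by walking $C$.
\end{enumerate}
Summing the four phases gives $O(n\log n)$ time in total.
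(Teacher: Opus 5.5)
Your proposal is correct and follows essentially the same argument as the paper. Correctness of $C$ comes from the inclusion $U\cap S_s(p)\subseteq I_{Q_j\cup R_j}(p)$ in Claim~\ref{clm:coverage-lists}. The $O(n\log n)$ bound comes from the $n/D_j$ degree bounds for building $G_T$, bucket-based greedy with each incidence processed $O(1)$ times, and $O(D_j\cdot n/D_j)$ validation; you only spell out implementation details (sampling, the monotone bucket pointer, marker arrays) that the paper leaves implicit.
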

\begin{proof}
The greedy step chooses at most $D_j$ intermediate points.
By Claim~\ref{clm:coverage-lists}, every target in $U$ covered by a chosen
point $p \in S$ lies in $I_{Q_j\cup R_j}(p)$. Filtering this list by
membership in $U$ and the navigation inequality finds exactly
$S_s(p)\cap U$. The validation step therefore computes the stated
set $C$ and returns it only if $|C|\ge|U|/2$.

For each sampled target, Claim~\ref{clm:coverage-lists} enumerates
its neighbors in $O(n/D_j)$ time. Since $|T|=O(D_j\log n)$,
constructing $G_T$ takes $O(n+n|T|/D_j)=O(n\log n)$ time.
For the greedy step, maintain only unselected intermediate points in buckets
indexed by their current gains $|N_{G_T}(p)\cap T_{\mathrm{unc}}|$.
When a sampled target becomes covered, update the gains of its
unselected neighbors. Each edge is processed a constant number of
times, and the maximum gain only decreases. Greedy therefore takes
$O(n+|E(G_T)|)=O(n\log n)$ time as well.

Using Boolean arrays for membership tests, validation takes
$O(D_j\cdot n/D_j)=O(n)$ time.
All other operations take $O(n)$ time, so the total running
time is $O(n\log n)$.
\end{proof}

\begin{lemma}\label{lem:sample-coverage}
If $D_j\ge k_s$, the set $S$ chosen by the greedy step satisfies
\[
 \left|\bigcup_{p\in S}N_{G_T}(p)\right|\ge(1-e^{-1})|T|>5|T|/8.
\]
\end{lemma}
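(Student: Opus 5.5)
We will apply Lemma~\ref{lem:greedy-coverage} to the sampled Set Cover instance, taking universe $V=T$, family $\mathcal F=\{N_{G_T}(p):p\in P\setminus\{s\}\}=\{S_s(p)\cap T\}$, and $k=k_s$. The greedy step matches the procedure in that lemma, except that it runs for up to $D_j\ge k_s$ choices instead of exactly $k_s$. So the plan has three parts: show that $T$ is coverable by $k_s$ sets of $\mathcal F$, reconcile the stopping rules, and finish with a numerical comparison.

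First, $T$ is coverable by at most $k_s$ members of $\mathcal F$. By the definition of $k_s$, some $N_s\subseteq P\setminus\{s\}$ with $|N_s|=k_s$ satisfies $\bigcup_{p\in N_s}S_s(p)=P\setminus\{s\}$. Since $T\subseteq U\subseteq P\setminus\{s\}$, intersecting with $T$ shows that the sets $N_{G_T}(p)$ for $p\in N_s$ cover $T$. Note also that $T\neq\varnothing$ because $U$ is nonempty, so Lemma~\ref{lem:greedy-coverage} applies.

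Second, the stopping rules. The greedy step chooses a point maximizing $|N_{G_T}(p)\cap T_{\mathrm{unc}}|$ and stops when $|S|=D_j$ or $T_{\mathrm{unc}}=\varnothing$. Its first $\min\{k_s,|S|\}$ choices are therefore a valid run of the procedure in Lemma~\ref{lem:greedy-coverage} with $k=k_s$, since ties may be resolved arbitrarily there as well.
\begin{itemize}
\item If greedy stops before $k_s$ choices, then $T_{\mathrm{unc}}=\varnothing$ and all of $T$ is covered.
\item Otherwise, the first $k_s$ choices already cover at least $(1-e^{-1})|T|$ targets by Lemma~\ref{lem:greedy-coverage}.
\end{itemize}
Further choices only enlarge $\bigcup_{p\in S}N_{G_T}(p)$, so the first inequality holds in either case. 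One subtle point is that the number of greedy choices, $D_j$, may exceed $k_s$. This requires only the monotonicity of the union; the lemma itself is invoked only on a prefix.

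Finally, $1-e^{-1}>5/8$ is equivalent to $e^{-1}<3/8$, that is, $e>8/3\approx2.667$. This holds since $e>2.7$, and it gives the strict inequality whenever $|T|>0$.

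No step is a real obstacle. The only care needed is in the stopping-rule comparison, verifying that a prefix of the greedy run is exactly an execution of the procedure in Lemma~\ref{lem:greedy-coverage}.
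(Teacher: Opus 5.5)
Your proposal is correct and follows essentially the same route as the paper: an optimal neighborhood of $s$, intersected with $T$, gives a cover of the sampled instance by $k_s$ sets, so Lemma~\ref{lem:greedy-coverage} applies to the first $k_s$ greedy choices, and the further choices allowed by $D_j\ge k_s$ only enlarge the covered set. Your explicit treatment of the stopping rules and the check $e>8/3$ simply write out details that the paper's three-line proof leaves implicit.
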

\begin{proof}
An optimum set of $k_s$ outgoing neighbors covers every target in
$U$, so its neighborhoods in $G_T$ cover $T$.
Lemma~\ref{lem:greedy-coverage} guarantees coverage of at least
$(1-e^{-1})|T|$ targets after at most $k_s$ greedy choices.
Since $D_j\ge k_s$, the final greedy set $S$ achieves
at least this coverage.
\end{proof}

\begin{lemma}
\label{lem:adequate}
If $D_j\ge k_s$, then $\operatorname{Cover}_j(s,U)$ returns a pair
$(S,C)$ with probability at least $1-n^{-6}$.
If $T=U$, the call succeeds deterministically.
\end{lemma}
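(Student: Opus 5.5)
The plan is to combine Lemma~\ref{lem:sample-coverage} with a union bound over all small collections of intermediate points, using Lemma~\ref{lem:hoeffding}. By Lemma~\ref{lem:validation}, the call returns $\bot$ exactly when the greedy set $S$ satisfies $|C|<|U|/2$, where $C=U\cap\bigcup_{p\in S}S_s(p)$. By Lemma~\ref{lem:sample-coverage}, when $D_j\ge k_s$ the set $S$ covers more than $5|T|/8$ of the sampled targets. So it suffices to show the following: with probability at least $1-n^{-6}$, every collection $S'\subseteq P\setminus\{s\}$ with $|S'|\le D_j$ that covers fewer than $|U|/2$ targets of $U$ also covers at most $5|T|/8$ targets of $T$. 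Note that $S_s(p)\cap T = N_{G_T}(p)$, so sampled coverage equals $|T\cap\bigcup_{p\in S'}S_s(p)|$.

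The deterministic case comes first. If $T=U$, then sampled and true coverage coincide. Greedy therefore covers more than $5|U|/8\ge|U|/2$ targets of $U$, and the call returns $(S,C)$.

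Otherwise $|T|=q=256D_j\ln n\le|U|$ and $T$ is a uniform $q$-subset of $U$. Fix a collection $S'$ of at most $D_j$ points with $|C(S')|<|U|/2$, and set $x_t=1$ for $t\in C(S')$ and $0$ otherwise. The population mean is below $1/2$, so sampled coverage above $5|T|/8$ forces the sample mean to exceed the population mean by at least $\eta=1/8$. Lemma~\ref{lem:hoeffding} bounds this probability by $e^{-2q/64}=e^{-q/32}=e^{-8D_j\ln n}=n^{-8D_j}$. Since $S'$ is fixed before sampling, the lemma applies.

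The number of collections of at most $D_j$ points from $P\setminus\{s\}$ is at most $\sum_{i\le D_j}\binom{n-1}{i}\le n^{D_j}$ (a crude bound valid for $n\ge2$). The union bound then gives failure probability at most $n^{D_j}\cdot n^{-8D_j}\le n^{-7D_j}\le n^{-6}$.

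On the complementary event, the greedy set $S$ is one of these collections, even though it depends on $T$. It covers more than $5|T|/8$ of $T$, so it cannot be a collection covering fewer than $|U|/2$ of $U$. Hence $|C|\ge|U|/2$ and the call returns $(S,C)$.

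The only delicate point is this adaptivity: greedy chooses $S$ after seeing $T$, so Hoeffding cannot be applied to $S$ directly. The uniform union bound over all collections fixed in advance handles it. The remaining check, that the counting bound $n^{D_j}$ is comfortably beaten by $n^{-8D_j}$, is routine.
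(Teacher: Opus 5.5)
Your proof is correct and takes essentially the same route as the paper. The $T=U$ case follows from Lemma~\ref{lem:sample-coverage}, and adaptivity is handled by applying Hoeffding with $\eta=1/8$ to each collection of at most $D_j$ points fixed before sampling, then taking a union bound over at most $n^{D_j}$ collections, which gives $n^{-7D_j}\le n^{-6}$.
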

\begin{proof}
We use a standard sampling argument%
~\cite{HarPeledSharir11,HarPeledEtAl16}.
Fix all random choices made before this execution of
$\operatorname{Cover}_j(s,U)$.
With probability at least $1-n^{-6}$, every choice of at most
$D_j$ intermediate points that covers at least $5/8$ of $T$ also
covers at least half of $U$. 
We prove this guarantee for all such sets on the same sample $T$,
so it also applies to the set chosen by greedy after observing $T$.

If $T=U$, Lemma~\ref{lem:sample-coverage} guarantees coverage
of more than $|U|/2$ targets, so validation succeeds.
Suppose $T\ne U$. Before drawing $T$, fix a set
$S\subseteq P\setminus\{s\}$ with $|S|\le D_j$, and let
\[
 C_S=U\cap\bigcup_{p\in S}S_s(p)
\]
% be the targets it covers. If $|C_S|<|U|/2$ but
% $|T\cap C_S|\ge5|T|/8$, then its coverage fraction on $T$
% exceeds that on $U$ by more than $1/8$.
% Lemma~\ref{lem:hoeffding}, applied to the indicators of membership
% in $C_S$, bounds the probability of this event by
% $\exp(-|T|/32)$.
be the targets it covers.
Suppose $|C_S|<|U|/2$.
If $|T\cap C_S|\ge5|T|/8$, the fraction of targets covered
by $S$ in $T$ exceeds that in $U$ by more than $1/8$.
Applying Lemma~\ref{lem:hoeffding} to the indicators
$\mathbf1_{\{t\in C_S\}}$, $t\in U$, with $\eta=1/8$ gives
\[
 \begin{aligned}
 \Prob\bigl[|T\cap C_S|\ge5|T|/8\bigr]
 &\le
 \Prob\left[
 \frac{|T\cap C_S|}{|T|}
 -\frac{|C_S|}{|U|}
 \ge\frac18
 \right]\\
 &\le e^{-2|T|(1/8)^2}
 =e^{-|T|/32}.
 \end{aligned}
\]
There are at most $n^{D_j}$ sets $S$ of size at most $D_j$.
A union bound therefore bounds the probability of any such event by
\[
 n^{D_j}\exp(-|T|/32)
 \le n^{D_j}\exp(-8D_j\ln n)
 =n^{-7D_j}\le n^{-6},
\]
since $|T|=256D_j\ln n$ and $D_j\ge1$.

By Lemma~\ref{lem:sample-coverage}, the greedy set $S$
covers more than $5|T|/8$ sampled targets, and $|S| \le D_j$.
The guarantee applies to $S$ even though it depends
on $T$.
Hence $|C|\ge|U|/2$ with probability at least $1-n^{-6}$,
and the validation step returns $(S,C)$.
\end{proof}

\subsection{The Complete Construction}
\label{sec:source-construction}
Algorithm~\ref{alg:construction},
$\operatorname{ConstructGraph}(P,d,\alpha)$, takes the distance
matrix of $(P,d)$ and $\alpha\ge1$. If $|P|\le1$, it returns
$(P,\varnothing)$. Otherwise, it stores the distance orders and
constructs the pivots, nearest-pivot distances, and lists for all
stages. These are shared by all sources. It then computes
$N_s=\operatorname{Neighbors}(s)$ for each $s\in P$ and returns
the graph with edges $(s,p)$ for $p\in N_s$.

\begin{algorithm}[H]
\small
\caption{$\operatorname{ConstructGraph}(P,d,\alpha)$}
\label{alg:construction}
\KwIn{Metric $(P,d)$ given by its distance matrix and $\alpha\ge1$.}
\KwOut{An $\alpha$-navigable graph on $P$.}
\lIf{$|P|\le1$}{\Return{$(P,\varnothing)$}}
Set the degree estimates and stages by~\eqref{eq:stage-limits}\;
Sort and store the distance order around each target using the fixed tie-breaking order\;
\For{$j\gets0$ \KwTo $\ell$}{
 Apply Lemma~\ref{lem:shared} with degree estimate $D_j$ to construct $Q_j$, $R_j$, $\mu_{Q_j\cup R_j}$, and all $I_{Q_j\cup R_j}(p)$\;
}
$E\gets\varnothing$\;
\ForEach{$s\in P$}{
 $N_s\gets\operatorname{Neighbors}(s)$\;
 Add $(s,p)$ to $E$ for each $p\in N_s$\;
}
\Return{$(P,E)$}\;
\end{algorithm}
For a fixed source $s$, the procedure $\operatorname{Neighbors}(s)$
in Algorithm~\ref{alg:source} maintains its chosen outgoing neighbors
$N$ and its uncovered targets $U$. Initialize
$N=\varnothing$ and $U=P\setminus\{s\}$, and process stages
$j=0,\ldots,\ell$ in order. 
At stage $j$, it adds $(Q_j\cup R_j)\setminus\{s\}$ to $N$
and removes the targets covered by these pivots using
$\mu_{Q_j\cup R_j}$.

While $U$ is nonempty, call $\operatorname{Cover}_j(s,U)$.
If a pair $(S,C)$ is returned, it adds $S$ to $N$,
removes $C$ from $U$, and continues at the same stage.
If $\bot$ is returned, it proceeds to the next stage
without changing $N$ or $U$. 
The procedure returns $N$ when $U$ becomes empty.
Lemma~\ref{lem:exactness} shows that this happens by stage $\ell$.

\begin{algorithm}[H]
\small
\caption{$\operatorname{Neighbors}(s)$}
\label{alg:source}
\KwIn{Source $s\in P$.}
\KwOut{An outgoing neighborhood $N\subseteq P\setminus\{s\}$
covering every target $t\ne s$.}
$N\gets\varnothing$; $U\gets P\setminus\{s\}$\;
\For{$j\gets0$ \KwTo $\ell$}{
 $N\gets N\cup\bigl((Q_j\cup R_j)\setminus\{s\}\bigr)$\;
 $U\gets\{t\in U:\alpha\mu_{Q_j\cup R_j}(t)\ge d(s,t)\}$\;
 \While{$U\ne\varnothing$}{
  \eIf{$\operatorname{Cover}_j(s,U)$ returns $(S,C)$}{
   $N\gets N\cup S$; $U\gets U\setminus C$\;
  }{\textbf{break}\;}
 }
 \lIf{$U=\varnothing$}{\textbf{break}}
}
\Return{$N$}\;
\end{algorithm}
\Needspace{8\baselineskip}

% The following lemma shows that filtering $U$ after adding
% the pivots retains exactly the targets not covered by
% the enlarged neighborhood.

% \begin{lemma}\label{lem:residual}
% Fix a source $s$, a set $N\subseteq P\setminus\{s\}$ of outgoing
% neighbors, and pivot sets $Q,R\subseteq P$. Let
% \[
%  U=(P\setminus\{s\})\setminus\bigcup_{p\in N}S_s(p)
% \]
% be the uncovered targets. Define
% \[
% \begin{aligned}
%  N'&=N\cup\bigl((Q\cup R)\setminus\{s\}\bigr),\\
%  U'&=\{t\in U:\alpha\mu_{Q\cup R}(t)\ge d(s,t)\}.
% \end{aligned}
% \]
% Then
% \[
%  U'=(P\setminus\{s\})\setminus\bigcup_{p\in N'}S_s(p).
% \]
% \end{lemma}
% \begin{proof}
% For $t\in U$, the pivot edges cover $t$ exactly when some
% $p\in(Q\cup R)\setminus\{s\}$ satisfies
% $\alpha d(p,t)<d(s,t)$.
% By the definition of $\mu_{Q\cup R}(t)$, this is equivalent
% to $\alpha\mu_{Q\cup R}(t)<d(s,t)$.
% Indeed, a pivot attaining this minimum cannot be $s$
% when the strict inequality holds, since
% $\alpha d(s,t)\ge d(s,t)$.
% Since $N\subseteq N'$, previously covered targets remain covered.
% Thus $U'$ is exactly the set of targets not covered by $N'$.
% \end{proof}

\begin{lemma}\label{lem:exactness}
Algorithm~\ref{alg:source} terminates and returns a set
$N\subseteq P\setminus\{s\}$ satisfying
\[
 \bigcup_{p\in N}S_s(p)=P\setminus\{s\}.
\]
Consequently, every graph returned by
Algorithm~\ref{alg:construction} is $\alpha$-navigable.
\end{lemma}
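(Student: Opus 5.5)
We would prove Lemma~\ref{lem:exactness} by maintaining a single invariant through Algorithm~\ref{alg:source} and then checking that the final stage cannot end with $U\ne\varnothing$. The invariant is
\[
 U=(P\setminus\{s\})\setminus\bigcup_{p\in N}S_s(p),
\]
that is, $U$ is exactly the set of targets not covered by the current neighborhood $N$. It holds at initialization, since $N=\varnothing$ and $U=P\setminus\{s\}$. All sets added to $N$ lie in $P\setminus\{s\}$, so $N\subseteq P\setminus\{s\}$ throughout.

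\textbf{Pivot step.} When $(Q_j\cup R_j)\setminus\{s\}$ is added, a target $t\in U$ becomes covered exactly when some pivot $p\ne s$ satisfies $\alpha d(p,t)<d(s,t)$. We claim this is equivalent to $\alpha\mu_{Q_j\cup R_j}(t)<d(s,t)$. If the strict inequality holds, the minimizing pivot cannot be $s$, because $\alpha d(s,t)\ge d(s,t)$. Conversely, any covering pivot witnesses the inequality. The filtering line therefore preserves the invariant. It also establishes condition~\eqref{eq:residual-condition} for the current $U$, so each call to $\operatorname{Cover}_j$ is legal. Subsequent removals only shrink $U$, so the condition persists.

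\textbf{Cover step.} If $\operatorname{Cover}_j$ returns $(S,C)$, Lemma~\ref{lem:validation} gives $C=U\cap\bigcup_{p\in S}S_s(p)$. Hence replacing $N$ by $N\cup S$ and $U$ by $U\setminus C$ preserves the invariant. A $\bot$ return changes nothing.

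\textbf{Termination.} Each successful call removes at least $|U|/2>0$ targets, since $U$ is nonempty and $|C|\ge|U|/2$ forces $|C|\ge1$. So each while loop runs at most $n$ times, and there are $\ell+1$ stages. Hence the algorithm terminates.

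\textbf{The final stage leaves $U$ empty.} This is the only step that could fail, and it needs one observation. At stage $\ell$ we have $D_\ell=n-1$, and by Lemma~\ref{lem:shared} its pivot set $Q_\ell$ gives $|\{p:d(p,t)<\mu_{Q_\ell}(t)\}|\le n/(n-1)$. That alone is not decisive. It is simpler to look at Algorithm~\ref{alg:distance-prefixes} directly: $8D_\ell\log n=8(n-1)\log n\ge n$ for $n\ge2$, so it returns $Q_\ell=P$ with $\mu\equiv0$. The pivot edges at stage $\ell$ therefore include all of $P\setminus\{s\}$. For each $t\ne s$ the point $p=t$ gives $\alpha d(t,t)=0<d(s,t)$, so $t$ is covered. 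The filter then empties $U$, and the algorithm returns at the latest after stage $\ell$.

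Alternatively, we could argue via the sampling lemma. Since $k_s\le n-1=D_\ell$ and $|T|=U$ once $|U|\le256D_\ell\ln n$, Lemma~\ref{lem:adequate} makes every call at stage $\ell$ succeed deterministically. We would cite the $Q_\ell=P$ argument as the cleaner route.

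Given termination with $U=\varnothing$, the invariant yields $\bigcup_{p\in N}S_s(p)=P\setminus\{s\}$. Navigability of the output of Algorithm~\ref{alg:construction} follows by applying this to every source $s$: every pair $s\ne t$ has some edge $(s,p)$ with $\alpha d(p,t)<d(s,t)$. The graph also has no self-loops, because $N_s\subseteq P\setminus\{s\}$.
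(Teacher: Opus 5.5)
Your proof is correct and follows the paper's argument. It uses the same invariant $U=(P\setminus\{s\})\setminus\bigcup_{p\in N}S_s(p)$, the same observation that excluding $s$ from the pivots is harmless because $\alpha d(s,t)\ge d(s,t)$, and the same reason $U$ empties at the end, namely $Q_\ell=P$ at stage $\ell$ (which you usefully verify from the first line of Algorithm~\ref{alg:distance-prefixes}). Your alternative via Lemma~\ref{lem:adequate} also works, since $T=U$ (you wrote $|T|=U$) always holds at stage $\ell$, but it is not needed.
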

\begin{proof}
We show that $\operatorname{Neighbors}(s)$ maintains
\[
 U=(P\setminus\{s\})\setminus\bigcup_{p\in N}S_s(p).
\]
The equality holds initially. 

At stage $j$, a target $t\in U$ is covered by the pivot edges
exactly when
$\alpha\mu_{Q_j\cup R_j}(t)<d(s,t)$.
Excluding $s$ from the pivot neighbors does not affect this test,
since $\alpha d(s,t)\ge d(s,t)$.
Previously covered targets remain covered because $N$ only grows.
Thus adding the pivots and filtering $U$ preserves the equality
and establishes~\eqref{eq:residual-condition}.

By Lemma~\ref{lem:validation}, the returned set $S$ covers
exactly the targets in $C$ within the current $U$.
The updates $N\gets N\cup S$ and $U\gets U\setminus C$
therefore preserve the equality.
A return of $\bot$ changes neither set.

Every successful call reduces $|U|$ by at least half, and the first failed call exits
its stage. At the last stage, $D_\ell=n-1$ and the construction
sets $Q_\ell=P$. 
Adding these pivots makes $U=\varnothing$.
The procedure therefore terminates with the claimed coverage.
\end{proof}

\begin{lemma}\label{lem:counts}
For every outcome of the random choices,
$\operatorname{Neighbors}(s)$ processes $O(\log n)$ stages
and makes $O(\log n)$ calls to $\operatorname{Cover}_j(s,U)$.
\end{lemma}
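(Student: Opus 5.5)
The bound on stages is immediate from the loop structure of Algorithm~\ref{alg:source}: the outer \textbf{for} loop ranges over $j=0,\ldots,\ell$, and by~\eqref{eq:stage-limits} we have $\ell=\lceil\log(n-1)\rceil\le\log n+1$ for $n\ge2$. Hence at most $\ell+1=O(\log n)$ stages are processed, regardless of the random choices. All of the work is therefore in counting the calls to $\operatorname{Cover}_j(s,U)$.

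I would split the calls into failed calls (those returning $\bot$) and successful calls (those returning a pair $(S,C)$). A failed call executes \textbf{break} and exits the inner \textbf{while} loop of its stage. Thus each stage contains at most one failed call, and there are at most $\ell+1$ failed calls in total.

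For successful calls I would use a potential argument on $|U|$ that holds deterministically. By Lemma~\ref{lem:validation}, every successful call returns $C\subseteq U$ with $|C|\ge|U|/2$, so the update $U\gets U\setminus C$ at least halves $|U|$. No other operation ever increases $U$: the pivot filtering at the start of each stage only removes targets, and failed calls leave $U$ unchanged. Therefore $|U|$ is nonincreasing across the entire run, not just within one stage. Initially $|U|=n-1$, and each call requires $U\neq\varnothing$. After $m$ successful calls we have $|U|\le(n-1)2^{-m}$, and since $U$ is a nonempty set of integer size this forces $m\le\lfloor\log(n-1)\rfloor+1$. Adding the two counts gives $O(\log n)$ calls in total.

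The only point needing care is that the halving bound must hold across stages as well as within each stage. This is exactly why I would verify that no step of Algorithm~\ref{alg:source} enlarges $U$. It is also why the argument depends on the validation in Lemma~\ref{lem:validation} being deterministic, rather than on the probabilistic guarantee of Lemma~\ref{lem:adequate}, so that the bound holds for every outcome of the random choices.
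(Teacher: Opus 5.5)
Your proof is correct and takes essentially the same route as the paper's. You bound the stages by $\ell+1=O(\log n)$, allow at most one failed call per stage because $\bot$ triggers \textbf{break}, and bound the successful calls by $1+\log(n-1)$, since each one halves $|U|$ and no step ever enlarges $U$. One small precision point: apply nonemptiness to $U$ just before the $m$-th successful call, not after it; this gives $2^{m-1}\le n-1$, which is exactly your bound.
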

\begin{proof}
The degree estimates double until reaching $n-1$, giving
$O(\log n)$ stages. Each successful call halves the number of
uncovered targets, and adding pivots can only decrease it further.
Hence at most $1+\log(n-1)$ calls succeed across all stages.
Each failed call exits its stage, so there is at most one failure
per stage. The total number of calls is $O(\log n)$.
\end{proof}

\subsection{The Approximation Guarantee}
\label{sec:correctness}
\label{sec:approximation}
We prove the approximation guarantee by bounding separately
the neighbors contributed by greedy choices and by pivots.

\begin{lemma}\label{lem:global}
With probability at least $1-n^{-4}$, the neighborhoods $N_s$
returned by $\operatorname{Neighbors}(s)$ satisfy
$|N_s|=O(k_s\log n)$ simultaneously for all $s\in P$.
\end{lemma}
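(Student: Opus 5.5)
The plan is to fix a source $s$, let $j^*$ be the first stage with $D_{j^*}\ge k_s$, and show that with high probability $\operatorname{Neighbors}(s)$ never advances past stage $j^*$. The neighbors are then split into \emph{pivot neighbors} and \emph{greedy neighbors}, and each group is bounded by $O(k_s\log n)$. A union bound over all $n$ sources gives the simultaneous statement.

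\textbf{Stopping by stage $j^*$.} Since $D_j=\min\{2^j,n-1\}$ and $k_s\le n-1$, the stage $j^*$ exists and satisfies $D_{j^*}<2k_s$, or $D_{j^*}=n-1$. In the latter case $D_{j^*}=k_s$ is possible only at the last stage. In either case $D_{j^*}\le 2k_s$. At stage $j^*$, every call $\operatorname{Cover}_{j^*}(s,U)$ has input satisfying~\eqref{eq:residual-condition}, by the invariant in Lemma~\ref{lem:exactness}. By Lemma~\ref{lem:adequate}, each such call returns a pair with probability at least $1-n^{-6}$, conditioned on all earlier randomness. This conditioning is the subtle point: the bound in Lemma~\ref{lem:adequate} holds for every fixed $U$ and every fixed prior history, so it applies to the adaptively chosen $U$.

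By Lemma~\ref{lem:counts}, at most $O(\log n)$ calls occur in total. A union bound over these calls, and over stages $j\le j^*$ that satisfy $D_j\ge k_s$ (only $j^*$ itself), shows that with probability at least $1-O(\log n)\,n^{-6}$ no call at stage $j^*$ fails. Stage $j^*$ can end only when $U=\varnothing$, so the procedure stops at stage $j^*$ or earlier.

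\textbf{Pivot neighbors.} On this event, the pivot neighbors come only from stages $j\le j^*$. Lemma~\ref{lem:shared} bounds their number by
\[
\sum_{j\le j^*}24D_j\log n\le 48D_{j^*}\log n\le 96k_s\log n.
\]
The first inequality holds because the $D_j$ at least double up to $j^*$.

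\textbf{Greedy neighbors.} Each successful call adds at most $D_j\le D_{j^*}\le 2k_s$ points, by Lemma~\ref{lem:validation}. Each successful call also halves $|U|$, so there are at most $1+\log n$ successful calls. Failed calls add nothing. Hence the greedy neighbors number $O(k_s\log n)$, and $|N_s|=O(k_s\log n)$.

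\textbf{Union over sources.} Applying a union bound over the $n$ sources gives a total failure probability of at most $n\cdot O(\log n)\,n^{-6}\le n^{-4}$ for $n$ large; small $n$ is absorbed into the constants. The main obstacle is justifying the conditional application of Lemma~\ref{lem:adequate} to adaptively generated inputs $U$ across successive calls. This is handled by the fact that each call draws a fresh sample $T$, and the lemma's guarantee holds for every fixed prior history.
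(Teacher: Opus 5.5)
Your proof is correct and follows essentially the same route as the paper. You fix the first stage $j^*$ with $D_{j^*}\ge k_s$, apply Lemma~\ref{lem:adequate} conditionally to each call there, union-bound over call indices and sources, and then bound the greedy neighbors by $O(\log n)$ successful calls of at most $D_{j^*}<2k_s$ points and the pivot neighbors by a geometric sum.

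There are two small slips, neither affecting correctness. First, a stated probability bound cannot have small $n$ ``absorbed into constants.'' The fix is to note that at most $2+\log(n-1)\le n$ calls occur at stage $j^*$ (the paper simply uses $\le n$), which gives exactly $n\cdot n\cdot n^{-6}=n^{-4}$. Second, the estimates do not double at the capped last stage $D_\ell=n-1$, so you only get $\sum_{j\le j^*}D_j\le 3D_{j^*}$ rather than $2D_{j^*}$; this changes only the constant.
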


\begin{proof}
Fix all pivot sets constructed by
Algorithm~\ref{alg:construction}.
All probabilities below are over the fresh samples drawn
by $\operatorname{Cover}_j$.
For each source $s$, let $j$ be the first stage with $D_j\ge k_s$.
Since $D_0=1\le k_s\le D_\ell$ and each estimate is at most
twice the preceding one, this stage exists and satisfies
$k_s\le D_j<2k_s$.

At this stage, the source makes at most $n$ calls:
each successful call removes at least one target,
and a return of $\bot$ exits the stage.
For each $r=1,\ldots,n$, condition on the random choices
preceding the $r$-th call, if it occurs.
Since $D_j\ge k_s$, Lemma~\ref{lem:adequate} bounds its
conditional probability of returning $\bot$ by $n^{-6}$.
Averaging over the preceding choices, the probability that
this call occurs and returns $\bot$ is therefore at most $n^{-6}$.
A union bound over all $n$ sources and $n$ possible call indices
per source shows that, with probability at least $1-n^{-4}$,
no call at a source's first stage with $D_j\ge k_s$ returns $\bot$.
On this event, every source finishes at that stage or earlier.

Fix a source $s$ and assume this simultaneous event holds.
At every visited stage $i\le j$, each successful call adds
at most $D_i\le D_j<2k_s$ neighbors.
By Lemma~\ref{lem:counts}, there are $O(\log n)$ such calls,
so the greedy choices contribute $O(k_s\log n)$ neighbors.
Since the estimates preceding $D_j$ are distinct powers of two
smaller than $D_j$, we have
$\sum_{i=0}^jD_i =O(k_s)$.
The pivots contribute at most
\[
 \sum_{i=0}^j|Q_i\cup R_i|
 \le24\log n\sum_{i=0}^jD_i
 =O(k_s\log n).
\]
Adding the two contributions proves the claim.
\end{proof}

\subsection{Running Time}
\label{sec:complexity}

\begin{lemma}\label{lem:total-complexity}
Algorithm~\ref{alg:construction} runs in
$O(n^2\log^2 n)$ time with probability at least $1-2n^{-6}$.
The same bound holds in expectation.
\end{lemma}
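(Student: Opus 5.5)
We split the running time into shared preprocessing and per-source work, and combine the failure probabilities by a union bound.

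\emph{Preprocessing.} Sorting and storing the distance orders around every target takes $O(n^2\log n)$ time deterministically. By \eqref{eq:stage-limits} there are $\ell+1=O(\log n)$ stages. For each stage, Lemma~\ref{lem:shared} constructs $Q_j$, $R_j$, $\mu_{Q_j\cup R_j}$ and all lists $I_{Q_j\cup R_j}(p)$. This takes $O(n^2\log n)$ time with probability at least $1-2n^{-7}$. A union bound over the $O(\log n)\le n/2$ stages shows that all stages meet this bound simultaneously with probability at least $1-n^{-6}$, for $n$ large enough; small $n$ are absorbed into constants. So preprocessing costs $O(n^2\log^2 n)$ on this event. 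For the last stage, Algorithm~\ref{alg:distance-prefixes} returns $Q_\ell=P$ deterministically, which is consistent with the lemma.

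\emph{Per-source work.} This part is deterministic given the preprocessing. For each source $s$, Lemma~\ref{lem:counts} bounds the number of stages and calls to $\operatorname{Cover}_j$ by $O(\log n)$ for every outcome. Each call costs $O(n\log n)$ by Lemma~\ref{lem:validation}. At each stage, adding $(Q_j\cup R_j)\setminus\{s\}$ to $N$ and filtering $U$ with $\mu_{Q_j\cup R_j}$ costs $O(n)$, using a Boolean array for $N$ to avoid duplicates. Hence $\operatorname{Neighbors}(s)$ runs in $O(n\log^2 n)$ time deterministically, and all sources together take $O(n^2\log^2 n)$. Note that the per-call bound in Lemma~\ref{lem:validation} holds whenever \eqref{eq:residual-condition} holds, and Lemma~\ref{lem:exactness} shows that the filtering step establishes that condition. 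Therefore no probabilistic event from Lemma~\ref{lem:global} is needed here. Summing gives $O(n^2\log^2 n)$ with probability at least $1-n^{-6}\ge 1-2n^{-6}$.

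\emph{Expected bound.} This is the only delicate point, since the retry loops in Algorithms~\ref{alg:distance-prefixes} and~\ref{alg:improvement-sets} are unbounded. Each pass costs $O(n^2)$ or $O(n^2\log n)$ respectively. Each pass succeeds independently with probability at least $1-n^{-7}\ge1/2$. So the number of passes is dominated by a geometric random variable with expectation at most $2$, and each stage has expected cost $O(n^2\log n)$, as already noted in the proofs of those lemmas. Linearity of expectation over the $O(\log n)$ stages, together with the deterministic per-source bound, gives expected total time $O(n^2\log^2 n)$.
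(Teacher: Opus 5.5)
Your proof is correct and follows the paper's argument: the per-source cost is deterministic by Lemmas~\ref{lem:counts} and~\ref{lem:validation}, the $O(\log n)$ pivot constructions are handled by a union bound over Lemma~\ref{lem:shared}, and the expected bound comes from the retry loops needing at most $2$ passes in expectation. One small cleanup: you bound the number of stages by $n/2$ for large $n$ and absorb small $n$ into the constants, but that does not work for a probability statement, since the retry loops make the running time unbounded even for fixed $n$; instead use $\ell+1\le n$, which holds for all $n\ge2$ and gives exactly the stated $1-2n^{-6}$, as the paper does.
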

\begin{proof}
For each source, Lemmas~\ref{lem:counts} and~\ref{lem:validation}
bound the time spent in $\operatorname{Cover}_j$ by
$O(n\log^2 n)$. Adding pivots and updating $U$ take $O(n)$ time
per stage or call. Processing all sources therefore takes
$O(n^2\log^2 n)$ time for every outcome of the random choices.

Sorting the distance orders takes $O(n^2\log n)$ time.
There is one pivot construction for each of the $\ell+1=O(\log n)$
stages. By Lemma~\ref{lem:shared} and a union bound over
$\ell+1\le n$ stages, all pivot constructions together take
$O(n^2\log^2 n)$ time with probability at least $1-2n^{-6}$.
The same bound holds in expectation by the expected-time
bounds proved for the pivot constructions.
\end{proof}

\begin{proof}[Proof of Theorem~\ref{thm:main}]
Exact $\alpha$-navigability follows from Lemma~\ref{lem:exactness}.
Summing the outdegree bounds of Lemma~\ref{lem:global} gives
$|E|=O(\log n)\sum_s k_s=O(\log n)\OPT_\alpha$.
Lemma~\ref{lem:total-complexity} provides the running-time bounds.
The space usage is $O(n^2)$, since the stored pivot lists have total
size $O(\sum_{j=0}^{\ell}n^2/D_j)=O(n^2)$ and all remaining data fit
within this bound.
By a union bound, the probability that either the approximation
bound or the stated running-time bound fails is at most
\[
 n^{-4}+2n^{-6}<n^{-2}.
 \qedhere
\]
\end{proof}

% \input{sections/conclusion}
% !TEX root = ../main.tex
\section*{Disclosure of AI Use.}

\begin{flushleft}
\renewcommand{\arraystretch}{1.3}
\begin{tabular}{@{}l@{\qquad}c@{}}
\textbf{Activity} & \textbf{AI used} \\
Asking the research question & No \\
Coming up with the approach & Yes \\
Proof development & Yes \\
Writing and exposition & Yes \\
Checking for bugs & Yes \\
Other supporting tasks & Yes \\
\end{tabular}
\end{flushleft}

\begin{sloppypar}
All AI assistance in this work was provided by GPT-6.1 Sol.
The authors formulated the research question and explored the use of
the random pivot set $Q$ together with sampling techniques from
streaming Set Cover. Building on these ideas, GPT-6.1 Sol proposed using adaptive
sampling to construct $R$ and completed the algorithm,
analysis, and proofs.
The construction of $R$ is a variant of the center-selection
method of Thorup and Zwick~\cite[Theorem~3.1]{ThorupZwick01}.

The authors examined and verified the resulting algorithm and
analysis. They prepared the manuscript with AI assistance throughout
the writing process, including drafting and revising text, checking
for typos, and polishing the exposition. GPT-6.1 Sol also helped
check the technical arguments for errors and generated figures.

The authors verified all proofs and results presented in the paper
and take full responsibility for its correctness, originality,
exposition, and attribution.
\end{sloppypar}

% !TEX root = ../main.tex
\section*{Acknowledgment.}
Hongyi Chen and Tianyi Zhang are supported by the Fundamental and
Interdisciplinary Disciplines Breakthrough Plan of the Ministry of
Education of China (No. JYB2025XDXM118) and the ``111 Center''
(No. B26023).

\bibliographystyle{alpha}
\bibliography{references}
% \appendix
\end{document}